\newcommand\definetool[2]{\newcommand{#1}{\texorpdfstring{{\scshape #2}}{#2}\xspace}}
\definetool{\ultimate}     {Ultimate}
\definetool{\ltlautomizer} {Ultimate LTLAutomizer}
\definetool{\automizer}    {Ultimate Automizer}
\definetool{\taipan}       {Ultimate Taipan}
\definetool{\smtinterpol}  {SMTInterpol}
\definetool{\cpachecker}   {CPAchecker}
\definetool{\blast}        {Blast}
\definetool{\slam}         {SLAM}
\definetool{\staipan}      {Taipan}
\definetool{\rstaipan}     {Automizer}
\definetool{\lstaipan}     {LazyTaipan}
\definetool{\zzz}          {Z3}
\definetool{\cvc}          {CVC4}
\newcommand{\svcomp}		{{SV-COMP}\xspace}
\newif\iflongversion
\newcommand\mT{\mathcal{T}}
\newcommand\T{$\mT$\xspace}
\newcommand\Arr{\ensuremath{\mathcal{A}}\xspace}
\newcommand\tarr{\ensuremath{\mathcal{T}_\Arr}\xspace}
\newcommand\store[3]{\ensuremath{#1\langle #2\lhd#3\rangle}\xspace}
\newcommand\select[2]{\ensuremath{#1[#2]}\xspace}
\newcommand\strongeq{\ensuremath{\sim}\xspace}
\newcommand\strongedge{\ensuremath{\leftrightarrow}}
\newcommand\weakedgei[1][i]{\ensuremath{\overset{\smash{#1}}{\strongedge}}}
\newcommand\selectedgei[1][i]{\mathbin{\tikz[baseline] \draw[<->,dashed] (0pt,.65ex) -- node[above,inner sep=0cm] {\scriptsize$#1$} (2.2em,.65ex);}}
\newcommand\weakpath[1][P]{\ensuremath{\overset{\smash{#1}}{\Leftrightarrow}}}
\newcommand\weakeqi[1][i]{\ensuremath{\approx_{#1}}\xspace}
\newcommand\weakcongi[1][i]{\ensuremath{\sim_{#1}}}
\newcommand\Stores[1]{\ensuremath{\mathop{\mathrm{Stores}}\left(#1\right)}}
\newcommand\Cond[1]{\ensuremath{\mathop{\mathrm{Cond}}(#1)}}
\newcommand\axdiff{\ensuremath{\mathit{AxDiff}}}
\newcommand\taxdiff{\ensuremath{\mT_\axdiff}\xspace}
\newcommand\diff[2]{\ensuremath{\mathop{\mathrm{diff}}(#1,#2)}\xspace}
\newcommand\tdiff{$\mathrm{diff}$\xspace}
\newcommand\rewrite[1][1]{\mathbin{\overset{\smash{#1}}{\rightsquigarrow}}}
\newcommand\smashrewrite[1][1]{\mathbin{\smash{\overset{\smash{#1}}{\rightsquigarrow}}}}
\newcommand\project[1]{\ensuremath{\!\downharpoonright\!#1}\xspace}
\newcommand\EQ[2]{\ensuremath{\mathop{\mathrm{EQ}}(#1,#2)}}
\newcommand\tEQ{$\mathrm{EQ}$\xspace}
\newcommand\weq[4]{\ensuremath{\mathop{\mathrm{weq}}(#1,#2,#3,#4)}}
\newcommand\bigweq[4]{\ensuremath{\mathop{\mathrm{weq}}\Big(#1,#2,#3,#4\Big)}}
\newcommand\nweq[4]{\ensuremath{\mathop{\mathrm{nweq}}(#1,#2,#3,#4)}}
\newcommand\bignweq[4]{\ensuremath{\mathop{\mathrm{nweq}}\Big(#1,#2,#3,#4\Big)}}
\newcommand\mI{\ensuremath{I}}
\newcommand\I{$\mI$\xspace}
\newcommand\A{$A$\xspace}
\newcommand\B{$B$\xspace}
\newcommand\smtlet{\mathop{\mathsf{let}}}
\def\ourtitle{Efficient Interpolation for the Theory of Arrays}
\title{\ourtitle}
\author{Jochen Hoenicke \and Tanja Schindler%
	\thanks{This work is supported by the
	German Research Council (DFG) under HO 5606/1-1.}}
\institute{
  University of Freiburg\\
  \email{\{hoenicke,schindle\}@informatik.uni-freiburg.de}}
\authorrunning{Hoenicke and Schindler}
\titlerunning{Efficient Interpolation for the Theory of Arrays}
\begin{document}
\maketitle

\iflongversion
\johofooter{lncs}{\emph{IJCAR 2018}}{10.1007/978-3-319-94205-6\_36}
\fi
\begin{abstract}
  Existing techniques for Craig interpolation for the quantifier-free fragment of the theory of arrays are inefficient for computing sequence and tree interpolants:
  the solver needs to run for every partitioning $(A, B)$ of the interpolation problem to avoid creating $AB$-mixed terms.
  We present a new approach using Proof Tree Preserving Interpolation and an array solver based on Weak Equivalence on Arrays.
  We give an interpolation algorithm for the lemmas produced by the array solver.
  The computed interpolants have worst-case exponential size for extensionality lemmas and worst-case quadratic size otherwise.
  We show that these bounds are strict in the sense that there are lemmas with no smaller interpolants.
  We implemented the algorithm and show that the produced interpolants are useful to prove memory safety for C programs.
\end{abstract}

\section{Introduction}
\label{sec:introduction}

Several model-checkers~\cite{DBLP:conf/tacas/AndrianovFMMV17,DBLP:journals/sttt/BeyerHJM07,DBLP:conf/tacas/CassezSRPSM17,DBLP:conf/tacas/DanglLW15,DBLP:conf/tacas/GreitschusDHNSS17,DBLP:conf/tacas/HeizmannCDGNMSS17,DBLP:conf/popl/HenzingerJMS02,DBLP:conf/cav/McMillan06,DBLP:conf/tacas/NutzDMP15}
use interpolants to find candidate invariants to prove the correctness of software.
They require efficient tools to check satisfiability of a formula in a decidable theory and to compute interpolants (usually sequence or tree interpolants) for unsatisfiable formulas.
Moreover, they often need to combine several theories, e.g., integer or bitvector theory for reasoning about numeric variables and array theory for reasoning about pointers.
In this paper we present an interpolation procedure for the quantifier-free fragment of the theory of arrays that allows for the combination with other theories and that reuses an existing unsatisfiability proof to compute interpolants efficiently.

Our method is based on the array solver presented in~\cite{DBLP:conf/frocos/ChristH15}, which fits well into existing Nelson-Oppen frameworks.
The solver generates lemmas, valid in the theory of arrays, that explain equalities between terms shared between different theories.
The terms do not necessarily belong to the same formula in the interpolation problem and the solver does not need to know the partitioning.
Instead, we use the technique of Proof Tree Preserving Interpolation~\cite{DBLP:conf/tacas/ChristHN13}, which produces interpolants from existing proofs that can contain propagated equalities between symbols from different parts of the interpolation problem.

The contribution of this paper is an algorithm to interpolate the lemmas produced by the solver of the theory of arrays without introducing quantifiers.
The solver only generates two types of lemmas, namely a variant of the read-over-write axiom and a variant of the extensionality axiom.
However, the lemmas contain array store chains of arbitrary length which need to be handled by the interpolation procedure.
The interpolants our algorithm produces summarize array store chains, e.\,g., they state that two shared arrays at the end of a sub-chain differ at most at $m$ indices, each satisfying a subformula.
Bruttomesso et al.~\cite{DBLP:journals/corr/abs-1204-2386} showed that adding a diff function to the theory of arrays makes the quantifier-free fragment closed under interpolation, i.e. it ensures the existence of quantifier-free interpolants for quantifier-free problems.
We use the diff function to obtain the indices for store chains and give a more efficient algorithm that exploits the special shape of the lemmas provided by the solver.

Nevertheless, the lemma interpolants produced by our algorithm may be exponential in size (with respect to the size of the input lemma).
We show that this is unavoidable as there are lemmas that have no small interpolants.  

\paragraph{Related Work.}
\label{sec:relatedwork}
The idea of computing interpolants from resolution proofs goes back to~Kraj\'i\v{c}ek and Pudl\'ak~\cite{DBLP:journals/jsyml/Krajicek97,DBLP:journals/jsyml/Pudlak97}.
McMillan~\cite{DBLP:journals/tcs/McMillan05} extended their work to SMT with a single theory.
The theory of arrays can be added by including quantified axioms and can be interpolated using, e.g., the method by Christ and Hoenicke~\cite{DBLP:conf/dagstuhl/ChristH10} for quantifier instantiation,
or the method of Bonacina and Johansson~\cite{DBLP:journals/jar/BonacinaJ15} for superposition calculus.
Brillout~et~al~\cite{DBLP:conf/cade/BrilloutKRW10a} apply a similar algorithm to compute interpolants from sequent calculus proofs.
In contrast to our approach, using such a procedure generates quantified interpolants.

Equality interpolating theories~\cite{DBLP:conf/cade/YorshM05,DBLP:journals/tocl/BruttomessoGR14} allow for the generation of quantifier-free interpolants in the combination of quantifier-free theories.
A theory is equality interpolating if it can express an interpolating term for each equality using only the symbols occurring in both parts of the interpolation problem.
The algorithm of Yorsh and Musuvathi~\cite{DBLP:conf/cade/YorshM05} only supports convex theories and is not applicable to the theory of arrays.
Bruttomesso~et~al.~\cite{DBLP:journals/tocl/BruttomessoGR14} extended the framework to non-convex theories.
They also present a complete interpolation procedure for the quantifier-free theory of arrays that works for theory combination in~\cite{DBLP:journals/corr/abs-1204-2386}.
However, their solver depends on the partitioning of the interpolation problem.
This can lead to exponential blow-up of the solving procedure.
Our interpolation procedure works on a proof produced by a more efficient array solver that is independent of the partitioning of the interpolation problem.

Totla and Wies~\cite{DBLP:journals/jar/TotlaW16} present an interpolation method for arrays based on complete instantiations.
It combines the idea of~\cite{DBLP:journals/tocl/BruttomessoGR14} with local theory extension~\cite{DBLP:conf/cade/Sofronie-Stokkermans05}.
Given an interpolation problem $A$ and $B$, they define two sets,
each using only symbols from $A$ resp.\ $B$, that contain the instantiations of the array axioms needed to prove unsatisfiability.
Then an existing solver and interpolation procedure for uninterpreted functions can be used to compute the interpolant.
The procedure causes a quadratic blow-up on the input formulas.
We also found that their procedure fails for some extensionality lemmas, when we used it to create candidate interpolants.
We give an example for this in Sect.~\ref{sec:complexity}.

The last two techniques require to know the partitioning at solving time.
Thus, when computing sequence~\cite{DBLP:journals/tcs/McMillan05} or tree interpolants~\cite{DBLP:conf/popl/HeizmannHP10}, they would require either an adapted interpolation procedure or the solver has to run multiple times.
In contrast, our method can easily be extended to tree interpolation~\cite{DBLP:journals/jar/ChristH16}.

\section{Basic Definitions}
\label{sec:notation}
We assume standard first-order logic.
A theory \T is given by a signature \(\Sigma\) and a set of axioms.
The theory of arrays \tarr is parameterized by an index theory and an element theory.
Its signature \(\Sigma_\Arr\) contains the \emph{select} (or \emph{read}) function \(\select{\cdot}{\cdot}\) and the \emph{store} (or \emph{write}) function \(\store{\cdot}{\cdot}{\cdot}\).
In the following, \(a, b, s, t\) denote array terms, \(i, j, k\) index terms and \(v, w\) element terms.
For array \(a\), index \(i\) and element \(v\), \(\select{a}{i}\) returns the element stored in \(a\) at \(i\), and \(\store{a}{i}{v}\) returns a copy of \(a\) where the element at index \(i\) is replaced by the element \(v\), leaving \(a\) unchanged.
The functions are defined by the following axioms proposed by McCarthy \cite{DBLP:conf/ifip/McCarthy62}.
\begin{align}
\forall a\ i\ v.\ \select{\store{a}{i}{v}}{i}&=v
\label{eq:selectoverstore_1}\tag{idx}\\
\forall a\ i\ j\ v.\ i\neq j \rightarrow\select{\store{a}{i}{v}}{j}&=\select{a}{j}
\label{eq:selectoverstore_2}\tag{read-over-write}
\end{align}
We consider the variant of the extensional theory of arrays proposed by Bruttomesso~et~al.~\cite{DBLP:journals/corr/abs-1204-2386} where the signature is extended by the function \(\diff{\cdot}{\cdot}\).
For distinct arrays \(a\) and \(b\), it returns an index where \(a\) and \(b\) differ, and an arbitrary index otherwise.
The extensionality axiom then becomes
\begin{equation}\label{eq:diff}
  \forall a\ b.\ \select{a}{\diff{a}{b}}=\select{b}{\diff{a}{b}}
  \rightarrow a=b \enspace.
\tag{ext-diff}
\end{equation}
The authors of \cite{DBLP:journals/corr/abs-1204-2386} have shown that the quantifier-free fragment of the theory of arrays with \tdiff, \taxdiff, is closed under interpolation.
To express the interpolants
conveniently, we use the notation from~\cite{DBLP:journals/jar/TotlaW16} for rewriting arrays.
For $m \geq 0$, we define $a\rewrite[\smash{m}] b$ for two arrays $a$ and $b$ inductively as
\begin{equation*}
  a\rewrite[0] b := a \qquad a\rewrite[m+1] b := \store{a}{\diff{a}{b}}{\select{b}{\diff{a}{b}}}\rewrite[m] b\enspace.
\end{equation*}
Thus, $a\rewrite[\smash{m}] b$ changes the values in $a$ at $m$ indices to the values stored in $b$.
The equality $a \rewrite[\smash{m}] b = b$ holds if and only if $a$ and $b$ differ at up to $m$ indices.
The indices where they differ are the diff terms occurring in $a\rewrite[\smash{m}]b$.

An interpolation problem $(A,B)$ is a pair of formulas where $A\land B$ is unsatisfiable.
A \emph{Craig interpolant} for $(A,B)$ is a formula \I such that
(i) \A implies \I in the theory \T,
(ii) \I and \B are \T-unsatisfiable and
(iii) all non-theory symbols occurring in \I are shared between \A and \B.
Given an interpolation problem $(A,B)$, the symbols shared between \A and \B are called \emph{shared}, symbols only occurring in \A are called \emph{\A-local} and symbols only occurring in \B, \emph{\B-local}.
A literal, e.g.  $a=b$, that contains \A-local and \B-local symbols is called \emph{mixed}.

\section{Preliminaries}
\label{sec:preliminaries}

Our interpolation procedure operates on theory lemmas instantiated from particular variants of the read-over-write and extensionality axioms, and is designed to be used within the proof tree preserving interpolation framework.
In the following, we give a short overview of this method and revisit the definitions and results about weakly equivalent arrays.

\subsection{Proof Tree Preserving Interpolation}
\label{sec:prooftreepreservinginterpolation}

The proof tree preserving interpolation scheme presented by Christ~et~al.~\cite{DBLP:conf/tacas/ChristHN13} allows to compute interpolants for an unsatisfiable formula using a resolution proof that is unaware of the interpolation problem.

For a partitioning $(A,B)$ of the interpolation problem, two projections $\cdot\project{A}$ and $\cdot\project{B}$ project a literal to its \A-part resp.\ \B-part.
For a literal $\ell$ occurring in~\A, we define \(\ell\project{A}\equiv \ell\).
If $\ell$ is \A-local, \(\ell\project{B}\equiv \mathbf{true}\).
For $\ell$ in \B, the projections are defined analogously.
These projections are canonically extended to conjunctions of literals.
A \emph{partial interpolant} of a clause $C$ occurring in the proof tree is defined as the interpolant of $A \land (\lnot C) \project{A}$ and $B \land (\lnot C)\project{B}$.
Partial interpolants can be computed inductively over the proof tree and the partial interpolant of the root is the interpolant of $A$ and $B$.
For a theory lemma $C$, a partial interpolant is computed for the interpolation problem
\(((\lnot C)\project{A},(\lnot C)\project{B})\).

The core idea of proof tree preserving interpolation is a scheme to handle mixed equalities.
For each \(a=b\) where \(a\) is \A-local and $b$ is \B-local, a fresh variable $x_{ab}$ is introduced.
This allows to define the projections as follows.
\[ (a = b) \project{A}  \equiv (a=x_{ab}) \qquad (a=b) \project{B} \equiv (x_{ab} = b) \]
Thus, $a=b$ is equivalent to \(\exists x_{ab}. (a = b) \project{A} \land (a = b) \project{B}\) and $x_{ab}$ is a new shared variable that may occur in partial interpolants.
For disequalities we introduce an uninterpreted predicate EQ 
and define the projections for $a\neq b$ as
\[ (a \neq b) \project{A} \equiv \EQ{x_{ab}}{a} \qquad (a \neq b) \project{B} \equiv \lnot \EQ{x_{ab}}{b}\enspace. \]
For an interpolation problem $(A\land (\lnot C) \project{A}, B\land (\lnot C) \project{B})$ where $\lnot C$ contains $a\neq b$, we require as additional symbol condition that $x_{ab}$ only occurs as first parameter of an \tEQ predicate which occurs positively in the interpolant, i.e., the interpolant has the form \(I[\EQ{x_{ab}}{s_1}]\dots[\EQ{x_{ab}}{s_n}]\)\footnote{One can show that such an interpolant exists for every equality interpolating theory in the sense of Definition~4.1 in \cite{DBLP:journals/tocl/BruttomessoGR14}.  The terms $s_i$ are the terms \underline{v} in that definition.}.
For a resolution step on the mixed pivot literal $a=b$, the following rule combines the partial interpolants of the input clauses to a partial interpolant of the resolvent.
\[
\inferrule {C_1\lor a=b:I_1[\EQ{x_{ab}}{s_1}]\dots[\EQ{x_{ab}}{s_n}] \\ C_2 \lor a\neq b: I_2(x_{ab}) } 
           {C_1 \lor C_2: I_1[I_2(s_1)]\dots[I_2(s_n)] }
\]

\subsection{Weakly Equivalent Arrays}
\label{sec:weaklyequivalentarrays}
Proof tree preserving interpolation can handle mixed literals, but it cannot deal with mixed \emph{terms} which can be produced when instantiating \eqref{eq:selectoverstore_2} on an \A-local store term and a \B-local index.
The lemmas produced in the decision procedure for the theory of arrays presented by Christ and Hoenicke~\cite{DBLP:conf/frocos/ChristH15} avoid such mixed terms by exploiting \emph{weak equivalences} between arrays.

For a formula \(F\), let \(V\) be the set of terms that contains the array terms in $F$ and in addition the select terms $\select{a}{i}$ and their indices $i$ and for each store term $\store{a}{i}{v}$ in $F$ the terms $i$, $v$, \(\select{a}{i}\) and $\select{\store{a}{i}{v}}{i}$.
Let \(\strongeq\) be the equivalence relation on \(V\) representing equality.
The \emph{weak equivalence graph} \(G^W\) is defined by its vertices, the array-valued terms in \(V\), and its undirected edges of the form (i) \(s_1\strongedge s_2\) if \(s_1\strongeq s_2\) and (ii) \(s_1\weakedgei[\smash{i}] s_2\) if \(s_1\) has the form \(\store{s_2}{i}{\cdot}\) or vice versa.
If two arrays \(a\) and \(b\) are connected in \(G^W\) by a path \(P\), they are called \emph{weakly equivalent}.
This is denoted by \(a\weakpath b\).
Weakly equivalent arrays can differ only at finitely many positions given by \(\Stores{P} := \{i\ |\ \exists s_1\ s_2.\ s_1 \weakedgei[\smash{i}] s_2 \in P\}\).
Two arrays \(a\) and \(b\) are called \emph{weakly equivalent on \(i\)}, denoted by \(a\weakeqi b\), if they are connected by a path \(P\) such that $k\not\strongeq i$ holds for each $k\in\Stores{P}$.
Two arrays \(a\) and \(b\) are called \emph{weakly congruent on \(i\)}, \(a\weakcongi b\), if they are weakly equivalent on \(i\), or if there exist \(\select{a'}{j}, \select{b'}{k} \in V\) with \(\select{a'}{j}\strongeq\select{b'}{k}\) and $j\strongeq k \strongeq i$ and $a' \weakeqi a$, $b' \weakeqi b$. 
If \(a\) and \(b\) are weakly congruent on \(i\), they must store the same value at~\(i\).
For example, if \(\store{a}{i+1}{v} \strongeq b\) and \(\select{b}{i}\strongeq\select{c}{i} \), arrays \(a\) and \(b\) are weakly equivalent on \(i\) while \(a\) and \(c\) are only weakly congruent on \(i\).

We use \(\Cond{a\weakpath b},\Cond{a \weakeqi b},\Cond{a \weakcongi b}\) to denote the conjunction of the literals $v = v'$ (resp.\ $v \neq v'$), $v,v'\in V$, such that $v\strongeq v'$ (resp. $v\not\strongeq v'$) is necessary to show the corresponding property.
Instances of array lemmas are generated according to the following rules:
\begin{equation}
\inferrule{a\weakeqi b \\ i\strongeq j \\ \select{a}{i},\select{b}{j}\in V}
		{\Cond{a\weakeqi b} \land i = j \rightarrow \select{a}{i}=\select{b}{j}}
\tag{roweq}\label{eq:rule_readoverweakeq}
\end{equation}
\begin{equation}
\inferrule{a\weakpath b \\ \forall i\in\Stores{P}.\ a\weakcongi b \\ a,b \in V}
		{\Cond{a\weakpath b} \land \bigwedge_{i\in\Stores{P}}\hspace{-1.5em} \Cond{a\weakcongi b} \rightarrow a=b}
\tag{weq-ext}\label{eq:rule_weakeqext}
\end{equation}
The first rule, based on \eqref{eq:selectoverstore_2}, propagates equalities between select terms and the second, based on extensionality, propagates equalities on array terms.
These rules are complete for the quantifier-free theory of arrays~\cite{DBLP:conf/frocos/ChristH15}.
In the following, we describe how to derive partial interpolants for these lemmas.

\section{Interpolants for Read-Over-Weakeq Lemmas}
\label{sec:interpolants_readoverweakeq}

A lemma generated by \eqref{eq:rule_readoverweakeq} explains the conflict (negation of the lemma)
\begin{equation*}
  \Cond{a \weakeqi b} \land i = j \land \select{a}{i} \neq \select{b}{j}\enspace .
\end{equation*}%
The weak equivalence $a \weakeqi b$ ensures that \(a\) and \(b\) are equal at $i=j$ which contradicts \(\select{a}{i}\neq\select{b}{j}\) (see Fig.~\ref{fig:readoverweakeq}).

\begin{figure}
\centering
\begin{tikzpicture}
	\node (ai) at (0,2) {\(a[i]\)};
	\node (bj) at (8,2) {\(b[j]\)};
	\node (a) at (0,0) {\(a\)};
	\node (dots) at (4,0) {\(\cdots\)};
	\node (b) at (8,0) {\(b\)};
	\node (i) at (2.5,1.3) {\(i\)};
	\node (j) at (5.5,1.3) {\(j\)};
	\draw [-] (ai) -- node {$|$} (bj);
	\draw [dotted, thick] (ai) -- (a) (ai) -- (i);
	\draw [dotted, thick] (bj) -- (b) (bj) -- (j);
	\draw [-] (i) -- (j);
\draw [decorate, decoration=zigzag] (a) -- (dots) node [above,near start] (k1) {\(k_1\)} node [above,near end] (k2) {\(k_2\)} -- (b) node (kmminus1)[above,near start] {\(k_{m-1}\)} node (km) [above,near end] {\(k_m\)};
	\draw [-] (i) -- node {\(\backslash\)} (k1);
	\draw [-] (i) -- node[pos=.6] {---} (k2);	
	\draw [-] (i) -- node[pos=.7] {\(/\)} (kmminus1);
        \draw [-] (i) -- node {\(/\)} (km);
\end{tikzpicture}
\caption{A read-over-weakeq conflict.
Solid lines represent strong \mbox{(dis-)equalities}, dotted lines function-argument relations, and zigzag lines represent weak paths consisting of store steps and array equalities.}
\label{fig:readoverweakeq}
\end{figure}
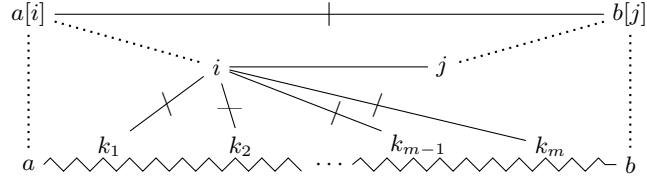%
The general idea for computing an interpolant for this conflict, similar to~\cite{DBLP:journals/corr/abs-1111-5652}, is to summarize maximal paths induced by literals of the same part (\A or \B), relying on the fact that the terms at the ends of these paths are shared.
If a shared term is equal to the index $i$, we can express that the shared arrays at the path ends coincide or must differ at the index.
There is a \emph{shared term for $i=j$} if $i$ or $j$ are shared or if $i=j$ is mixed. 
If there is no shared term for $i=j$, the interpolant can be expressed using diff chains to capture the index.
We identify four basic cases:
(i) there is a shared term for \(i=j\) and \(\select{a}{i}=\select{b}{j}\) is in \B or mixed,
(ii) there is a shared term for \(i=j\) and \(\select{a}{i}=\select{b}{j}\) is \A-local,
(iii) both \(i\) and \(j\) are \B-local, and
(iv) both \(i\) and \(j\) are \A-local.

\subsection{Shared Term for $i=j$ and $\select{a}{i}=\select{b}{j}$ is in \B or mixed} 
\label{sec:interpolants_readoverweakeq_shared}

If there exists a shared term \(x\) for the index equality \(i=j\), the interpolant can contain terms \(\select{s}{x}\) for shared array terms \(s\) occurring on the weak path between $a$ and $b$.
The basic idea is to summarize the weak \A-paths by applying rule \eqref{eq:rule_readoverweakeq} on their end terms.

\begin{example}\label{the:motivatingexample_readoverweakeq_shared}Consider the following read-over-weakeq conflict:
  \begin{align*}
	&\phantom{{}\land{}} a=s_1 \land \store{s_1}{k_1}{v_1}=s_2 \land \store{s_2}{k_2}{v_2}=s_3 \land s_3=b\\
	&\land i\neq k_1 \land i\neq k_2 \land i=j \land \select{a}{i}\neq\select{b}{j}
  \end{align*}
where \(a\), \(k_2\), \(v_2\), \(i\) are \A-local, \(b\), \(k_1\), \(v_1\), \(j\) are \B-local, and \(s_1\), \(s_2\), \(s_3\) are shared.
Projecting the mixed literals on \A and \B as described in Sect.~\ref{sec:prooftreepreservinginterpolation} yields the interpolation problem
\begin{align*}
A &:  a=s_1 \land \store{s_2}{k_2}{v_2}=s_3 \land \EQ{x_{i{k_1}}}{i} \land i\neq k_2 \land i=x_{ij} \land \EQ{x_{\select{a}{i}\select{b}{j}}}{\select{a}{i}}\\
B &: \store{s_1}{k_1}{v_1}=s_2 \land s_3=b \land \lnot\EQ{x_{i{k_1}}}{k_1} \land x_{ij}=j \land \lnot\EQ{x_{\select{a}{i}\select{b}{j}}}{\select{b}{j}}\enspace.
\end{align*}
An interpolant is
\(\mI\equiv \EQ{x_{\select{a}{i}\select{b}{j}}}{\select{s_1}{x_{ij}}} \land \select{s_2}{x_{ij}}=\select{s_3}{x_{ij}} \land \EQ{x_{i{k_1}}}{x_{ij}}\).

\end{example}

\paragraph{Algorithm.}
The first step is to subdivide the weak path \(P:a\weakeqi b\) into \A- and \B-paths.
An equality edge $\strongedge$ is assigned to either an \A- or \B-path depending on whether the corresponding equality is in \A or \B.
A mixed equality $a'=b'$ is split into the \A-local equality $a'=x_{a'b'}$ and the \B-local equality $x_{a'b'} = b'$.
Store edges $\weakedgei$ are assigned depending on which part contains the store term.
If an equality or store term is shared between both parts, the algorithm can assign it to \A or \B arbitrarily.
The whole path from $a$ to $b$ is then an alternation of \A- and \B-paths, which meet at shared boundary terms.

Let \(x\) be the shared term for \(i=j\), i.e.\ \(x\) stands for \(i\) if \(i\) is shared, for \(j\) if \(i\) is not shared but \(j\) is, and for the auxiliary variable \(x_{ij}\) if \(i=j\) is mixed.

\noindent
(i) An inner \A-path $\pi: s_1 \weakeqi s_2$ of $P$ starts and ends with a shared term.
The summary is $s_1[x] = s_2[x]$.
For a store edge on $\pi$ with index $k$, add the disjunct $x = k$ if the corresponding disequality $i\neq k$ is \B-local, and the disjunct $\EQ{x_{ik}}{x}$ if the disequality is mixed.
The interpolant of the subpath is
\begin{equation*}
  I_\pi \equiv s_1[x] = s_2[x] \lor F^A_\pi(x) \qquad \text{where }
F^A_\pi(x) \equiv \hspace{-1em}\bigvee_{\substack{k\in\Stores{\pi}\\ i\neq k\text{ \B-local}}} \hspace{-1em}x=k \hspace{.5em}\lor \hspace{-1em}\bigvee_{\substack{k\in\Stores{\pi}\\ i\neq k\text{ mixed}}}\hspace{-1em}\EQ{x_{ik}}{x}\enspace.
\end{equation*}
(ii) If \(\select{a}{i}\neq\select{b}{j}\) is mixed and \(\select{a}{i}\) is \A-local, the first \A-path on \(P\) starts with \(a\) or \(a\) is shared, i.e.
$\pi: a \weakeqi s_1$ (where $s_1$ can be $a$).
For the path \(\pi\), build the term \(\EQ{x_{\select{a}{i}\select{b}{j}}}{\select{s_1}{x}}\) and add $F^A_{\pi}(x)$ as in case (i).
\begin{equation*}
  I_\pi \equiv \EQ{x_{\select{a}{i}\select{b}{j}}}{\select{s_1}{x}} \lor F^A_{\pi}(x)
\end{equation*}
(iii) Similarly in the case where \(\select{a}{i}\neq\select{b}{j}\) is mixed and \(\select{b}{j}\) is \A-local, the last \A-path on \(P\) ends with \(b\) or \(b\) is shared,  $\pi: s_n \weakeqi b$.  In this case the disjunct $i\neq j$ needs to be added if $i=j$ is \B-local and $i$, $j$ are both shared.
\begin{equation*}
  I_\pi \equiv \EQ{x_{\select{a}{i}\select{b}{j}}}{\select{s_n}{x}} \lor F^A_{\pi}(x)\;\;[{}\lor i \neq j]
\end{equation*}
(iv) For every \B-path \(\pi\), add the conjunct \(x\neq k\) for each \A-local index disequality \(i\neq k\), and the conjunct \(\EQ{x_{ik}}{x}\) for each mixed index disequality \(i\neq k\) on \(\pi\).
We define
\begin{equation*}
F^B_{\pi}(x) \equiv \hspace{-1em}\bigwedge_{\substack{k\in\Stores{\pi}\\ i\neq k\text{ \A-local}}}\hspace{-1em} x\neq k \hspace{.5em} \land \hspace{-1em}\bigwedge_{\substack{k\in\Stores{\pi}\\ i\neq k\text{ mixed}}}\hspace{-1em}\EQ{x_{ik}}{x}\enspace.
\end{equation*}

The lemma interpolant is the conjunction of the above path interpolants.
If \(i,j\) are shared, \(\select{b}{j}\) is in \B, and \(i=j\) is \A-local, add the conjunct \(i=j\).

\begin{lemma}\label{the:correctness_readoverweakeq_shared}
  If $x$ is a shared term for $i=j$ and $\select{a}{i} = \select{b}{j}$ is in $B$ or mixed, a partial interpolant of
  the lemma
  $\Cond{a \weakeqi b} \land i = j \rightarrow \select{a}{i} = \select{b}{j}$
  is
\begin{equation*}\label{eq:interpolants_readoverweakeq_shared}
\mI\equiv\bigwedge_{\pi\in\text{\A-paths}}\hspace{-1em}\mI_{\pi}\quad\land\quad\bigwedge_{\pi\in\text{\B-paths}} \hspace{-1em} F^B_\pi(x) \quad[{}\land i=j{}]\enspace.
\end{equation*}
\end{lemma}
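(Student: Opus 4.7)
My plan is to verify the three defining conditions of a partial interpolant, applied to the conflict $\Cond{a\weakeqi b} \land i=j \land \select{a}{i}\neq\select{b}{j}$ obtained by negating the lemma: (i) the symbol conditions from Sect.~\ref{sec:prooftreepreservinginterpolation}, (ii) $A \Rightarrow \mI$, and (iii) $\mI \land B$ is $\tarr$-unsatisfiable, where $A$ and $B$ denote the \A- and \B-projections of the conflict. A useful preliminary observation that drives both directions is that in every subcase covered by this lemma the identity $i=x$ is derivable on the \A-side (trivially if $x\equiv i$, from the A-literal $i=j$ if $x\equiv j$ and $i=j$ is A-local, or from the A-projection $i=x_{ij}$ if $i=j$ is mixed) and, dually, $j=x$ is derivable on the \B-side.

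For $A\Rightarrow \mI$ I would treat each conjunct separately. For an inner \A-path $\pi\colon s_1\weakeqi s_2$, the \A-local and mixed store-index disequalities $i\neq k$ with $k\in\Stores{\pi}$ are present on \A and, using $i=x$, translate to $x\neq k$ or to the negation of $\EQ{x_{ik}}{x}$; each \B-local or mixed disequality that is not captured by \A contributes a disjunct to $F^A_\pi(x)$. If every disjunct of $F^A_\pi(x)$ is false on \A then $x\neq k$ holds for every $k\in\Stores{\pi}$, so rule \eqref{eq:rule_readoverweakeq} yields $\select{s_1}{x}=\select{s_2}{x}$; otherwise $F^A_\pi(x)$ itself holds, so $\mI_\pi$ is true either way. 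The endpoint \A-paths in the mixed-select cases are handled analogously, with the extra step of substituting the \A-projection $\EQ{x_{\select{a}{i}\select{b}{j}}}{\select{a}{i}}$ into the derived $\select{a}{x}=\select{s_1}{x}$ (using $i=x$) to produce $\EQ{x_{\select{a}{i}\select{b}{j}}}{\select{s_1}{x}}$. Each $F^B_\pi(x)$ conjunct is immediate from \A together with $i=x$, and the optional $i=j$ conjunct is directly an A-literal.

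For the unsatisfiability of $\mI\land B$ I would run the dual argument. On each \B-path the added $F^B_\pi(x)$, combined with $j=x$, excludes every store-index equality $x=k$, so rule \eqref{eq:rule_readoverweakeq} yields a select-equality between the shared boundary terms. For each \A-path $\pi$, each disjunct of $F^A_\pi(x)$ contradicts a \B-local or mixed disequality from the conflict (via $j=x$ together with either $x=k$ or $\EQ{x_{ik}}{x}$ combined with the \B-projection $\lnot\EQ{x_{ik}}{k}$), collapsing the disjunction $\select{s_1}{x}=\select{s_2}{x}\lor F^A_\pi(x)$ to its first disjunct. Concatenating all path-summaries through the shared boundary terms gives $\select{a}{x}=\select{b}{x}$, which combines with the endpoint EQ-literals in the mixed-select case to contradict the \B-projection $\lnot\EQ{x_{\select{a}{i}\select{b}{j}}}{\select{b}{j}}$; when $\select{a}{i}\neq\select{b}{j}$ lies entirely in \B the contradiction is obtained directly using $i=x=j$ provided by \B (or by the extra conjunct $i=j$ added in \mI). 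The principal obstacle is the disciplined book-keeping of the numerous subcases arising from each literal being \A-local, \B-local, shared, or mixed, together with the obligation to ensure that the auxiliary variables $x_{ij}$, $x_{ik}$, and $x_{\select{a}{i}\select{b}{j}}$ occur in $\mI$ only as the first argument of positive \tEQ-predicates, as required by the proof-tree-preserving framework.
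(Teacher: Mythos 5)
You follow the paper's own route---symbol condition, then $\lnot C\project{A}\Rightarrow\mI$, then $\mI\land\lnot C\project{B}$ unsatisfiable, arguing path by path with rule~\eqref{eq:rule_readoverweakeq} and transitivity---but the ``preliminary observation'' you rest both directions on is false in exactly the corner cases that the paper's proof must treat separately. If $i$ is \B-local, $i=x$ is \emph{not} derivable from the \A-part; what saves that case is that $F^B_\pi(x)$ is then an empty conjunction (no \A-local or mixed index disequality can occur on a \B-path), so nothing has to be derived. Dually, $j=x$ need not follow from the \B-part when $i,j$ are both shared, $x$ is $i$, and $i=j$ is \A-local---which is why the optional conjunct $[\land\, i=j]$ is in \mI---and $j=x$ need not follow from the \A-part in case~(iii) when $i,j$ are shared and $i=j$ is \B-local, which is precisely why the case-(iii) path interpolant carries the optional disjunct $[\lor\, i\neq j]$. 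You acknowledge the extra conjunct in the $\mI\land B$ direction, but you never justify or use the $i\neq j$ disjunct, so your $A\Rightarrow\mI$ argument is incomplete for the endpoint \A-path of case~(iii).

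There are also slips in the projection bookkeeping. On an \A-path a mixed index disequality $i\neq k$ has $k$ \A-local and $i$ \B-local, so its \B-projection is $\lnot\EQ{x_{ik}}{i}$, not $\lnot\EQ{x_{ik}}{k}$; refuting the disjunct $\EQ{x_{ik}}{x}$ on the \B-side goes through $i=x$ (available because $i$ is then in \B), not through $j=x$. Likewise, turning \B-local disequalities into $x\neq k$ on \B-paths needs $i=x$ on the \B-side, and on the \A-side the mixed disequality contributes the positive literal $\EQ{x_{ik}}{k}$, from which $x\neq k$ is obtained together with $\lnot F^A_\pi(x)$---it does not ``translate, using $i=x$, to the negation of $\EQ{x_{ik}}{x}$.'' These defects are repairable and your overall plan matches the paper's proof, but as written the case analysis---the part you yourself identify as the principal obstacle---does not go through uniformly.
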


\begin{proof}
  The interpolant only contains the shared boundary arrays, the shared term $x$ for $i=j$, auxiliary variables for mixed disequalities under an EQ predicate, and shared store indices $k$ where the store term is in a different part than the corresponding index disequality.
  
\(\lnot C\project{A}\) implies \(\mI\):
For a \B-path \(\pi\), we show that \(F^B_\pi(x)\) follows from the \A-part.
If \(i\) is \B-local, there are no \A-local or mixed index disequalities and \(F^B_\pi(x)\) holds trivially.
Otherwise \(i=x\) follows from \A, since either \(i\) is shared and \(x\) is \(i\), \(i=j\) is \A-local and \(x\) is \(j\), or \(i=x\) is the \A-projection of the mixed equality \(i=j\).
Then \(F^B_\pi(x)\) follows by replacing \(i\) by \(x\) in \A-local disequalities and \A-projections of mixed disequalities on \(\pi\).
For an \A-path \(\pi\), if \(F^A_\pi(x)\) does not hold, we get \(\select{s_1}{x}=\select{s_2}{x}\) by applying rule \eqref{eq:rule_readoverweakeq}.
Note that \(x\neq k\) follows from \(i=x\) if \(i \neq k\) is \A-local, and from \(\EQ{x_{ik}}{k}\) and \(\lnot F^A_\pi(x)\) in the mixed case.
For the outer \A-path in case (ii), \(\select{a}{x} = \select{s_1}{x}\) is combined with the \A-projection of the mixed disequality \(\select{a}{i}\neq\select{b}{j}\) using \(i=x\), which yields the \tEQ term.
Analogously we get the \tEQ term for (iii), but to derive \(j=x\) in the case where both \(i\) and \(j\) are shared but \(i=j\) is \B-local, we need to exclude \(i \neq j\).

\(\lnot C\project{B}\land\mI\) is unsat:
Again if $i$ is in \B then $i=x$ follows from \B by the choice of \(x\).
For a \B-path \(\pi\), we can conclude \(\select{s_1}{x}=\select{s_2}{x}\) by applying rule \eqref{eq:rule_readoverweakeq} and using the  index disequalities in \(\lnot C\project{B}\) and \(F^B_{\pi}(x)\).
For an \A-path \(\pi\), \(\select{s_1}{x}=\select{s_2}{x}\) (or, in cases (ii) and (iii), \(\EQ{x_{\select{a}{i}\select{b}{j}}}{\select{s}{x}}\)) follows from \(\mI_{\pi}\) using the \B-local index disequalities and \(i=x\) to show that \(F^A_\pi(x)\) cannot hold.
Transitivity and the \B-projection of \(\select{a}{i}\neq\select{b}{j}\) lead to a contradiction.
If \(i=j\) is \A-local, \(i\) is the shared term, and \(\select{b}{j}\) is in \B, the conjunct \(i=j\) in \I is needed here.
\qed
\end{proof}

\subsection{Shared Term for $i=j$ and $\select{a}{i}=\select{b}{j}$ is \A-local}
\label{sec:interpolants_readoverweakeq_shared_a}

If there exists a shared index for \(i=j\) and \(\select{a}{i}=\select{b}{j}\) is \A-local, we build disequalities for the \B-paths instead of equalities for the \A-paths.
This corresponds to obtaining the interpolant of the inverse problem $(B,A)$ by Sect.~\ref{sec:interpolants_readoverweakeq_shared} and negating the resulting formula.
Only the \tEQ terms are not negated because of the asymmetry of the projection of mixed disequalities.

\begin{lemma}\label{the:correctness_readoverweakeq_shared_a}
  Using the definitions of $F^A_\pi$ and $F^B_\pi$ from the previous section,
  if $x$ is a shared term for $i=j$ and $\select{a}{i} = \select{b}{j}$ is \A-local,
  then a partial interpolant of the lemma $\Cond{a \weakeqi b} \land i = j \rightarrow \select{a}{i} = \select{b}{j}$
  is
\begin{equation*}\label{eq:interpolants_readoverweakeq_shared_a}
\mI\equiv\bigvee_{(\pi:s_1\weakeqi s_2)\in\text{\B-paths}}\hspace{-1.5em}(s_1[x] \neq s_2[x] \land F^B_\pi(x)) \quad\lor\bigvee_{\pi\in\text{\A-paths}} \hspace{-1em} F^A_\pi(x) \quad[{}\lor i\neq j]\enspace.
\end{equation*}
\end{lemma}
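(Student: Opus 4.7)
The plan is to mirror the proof of Lemma~\ref{the:correctness_readoverweakeq_shared}, exploiting the near-duality that the proposed $\mI$ is (up to the unnegated $\EQ{\cdot}{\cdot}$ terms) the Boolean negation of the interpolant produced by Section~\ref{sec:interpolants_readoverweakeq_shared} for the swapped problem $(B,A)$. The $\EQ{\cdot}{\cdot}$ predicates retain their positive sign because the projection of a mixed disequality is itself asymmetric. The symbol condition is then immediate: $\mI$ mentions only the shared boundary arrays of the $A$- and $B$-subpaths of the weak path, the shared term $x$, shared store indices $k$ whose associated disequality $i\neq k$ lies in the opposite part, and auxiliary variables $x_{ik}$ inside positive $\EQ{\cdot}{\cdot}$ predicates. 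The variable $x_{\select{a}{i}\select{b}{j}}$ does not appear since $\select{a}{i}=\select{b}{j}$ is $A$-local and hence not split.

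For $\lnot C\project{A}\Rightarrow \mI$, I would argue by contradiction: assuming every disjunct of $\mI$ fails, $i=x$ holds in $A$ exactly as in the previous lemma. For each $A$-path $\pi$, $\lnot F^A_\pi(x)$ supplies $x\neq k$ for every $B$-local index disequality; $A$-local disequalities give $x\neq k$ via $i=x$; and a mixed disequality immediately yields a contradiction, since $\EQ{x_{ik}}{i}\in\lnot C\project{A}$ together with $i=x$ conflicts with the conjunct $\lnot\EQ{x_{ik}}{x}$ of $\lnot F^A_\pi(x)$. Applying rule~\eqref{eq:rule_readoverweakeq} therefore yields $\select{s_1}{x}=\select{s_2}{x}$ along every $A$-subpath. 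For each $B$-subpath, the conjuncts of $F^B_\pi(x)$ follow from $\lnot C\project{A}$ (using $i=x$ to rewrite $\EQ{x_{ik}}{i}$ into $\EQ{x_{ik}}{x}$), so the failure of $s_1[x]\neq s_2[x]\land F^B_\pi(x)$ forces $s_1[x]=s_2[x]$. Chaining the endpoint equalities across the whole weak path $a\weakeqi b$ gives $\select{a}{x}=\select{b}{x}$; combined with $i=x$ and $j=x$ (the latter from the definition of $x$, from $i=j\in\lnot C\project{A}$, or from the optional $i\neq j$ disjunct, which is added precisely when $i=j$ is $B$-local and both $i,j$ are shared), this contradicts the $A$-local literal $\select{a}{i}\neq\select{b}{j}\in\lnot C\project{A}$.

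For $\lnot C\project{B}\land \mI$ unsatisfiable, I would refute each disjunct of $\mI$ individually. A disjunct $F^A_\pi(x)$ clashes with $\lnot C\project{B}$: an $x=k$ subterm is ruled out by the $B$-local $i\neq k$ via $i=x$ (which follows from the choice of $x$ whenever $i$ is $B$-visible), and an $\EQ{x_{ik}}{x}$ subterm is ruled out by $\lnot\EQ{x_{ik}}{k}\in\lnot C\project{B}$. A disjunct $s_1[x]\neq s_2[x]\land F^B_\pi(x)$ on a $B$-path $\pi$ contradicts rule~\eqref{eq:rule_readoverweakeq} applied inside $B$: the $B$-local index disequalities together with $F^B_\pi(x)$, which promotes the $A$-local and mixed disequalities to the shared index $x$, yield $\select{s_1}{x}=\select{s_2}{x}$. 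The optional $i\neq j$ disjunct is refuted by $i=j\in\lnot C\project{B}$, whose presence is exactly what triggers adding the disjunct. The main obstacle is tracking the spots where the literal $(B,A)$-dualization is intentionally broken by keeping the $\EQ{\cdot}{\cdot}$ terms positive; each such spot must be reconciled with the projection rules of proof-tree preserving interpolation, so that mixed-disequality reasoning collapses by congruence on $\EQ{x_{ik}}{\cdot}$ rather than via a now-missing dual literal.
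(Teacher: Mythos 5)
Your proof is correct and takes essentially the same route as the paper, which obtains this lemma by dualizing Lemma~\ref{the:correctness_readoverweakeq_shared} (swap $A$ and $B$, negate, and keep the \tEQ atoms positive because of the asymmetric projection of mixed disequalities); you simply spell that dualization out, including the needed $i=x$, $j=x$ bookkeeping and the role of the optional $i\neq j$ disjunct. One harmless remark: since $\select{a}{i}=\select{b}{j}$ is \A-local, neither $i$ nor $j$ is \B-local, so mixed index disequalities can only occur on \B-paths and your sub-cases for mixed disequalities inside $F^A_\pi$ are vacuous in this lemma.
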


\subsection{Both $i$ and $j$ are \B-local}
\label{sec:interpolants_readoverweakeq_blocal}

When both \(i\) and \(j\) are \B-local (or both \A-local), we may not find a shared term for the index where \(a\) and \(b\) should be equal.
Instead we use the \tdiff function to express all indices where \(a\) and \(b\) differ.
For instance, if \(a=\store{\store{b}{i}{v}}{j}{w}\) for arrays \(a, b\) with $a[j] \neq b[j]$,  then \(\diff{a}{b}=j\) or \(\diff{a\rewrite[\smash{1}] b}{b}=j\) hold.

\begin{example}
\label{the:motivatingexample_readoverweakeq_blocal}
Consider the following conflict:
\begin{equation*}
a=s_1 \land \store{s_1}{k}{v} = s_2 \land s_2=b \land i\neq k \land i=j \land \select{a}{i}\neq\select{b}{j}
\end{equation*}
where \(a\), \(b\), \(i\), \(j\) are \B-local, \(k\), \(v\) are \A-local, and \(s_1\), \(s_2\) are shared.
Splitting the mixed disequality \(i\neq k\) as described in Sect.~\ref{sec:prooftreepreservinginterpolation} yields the interpolation problem
\begin{align*}
A &: \store{s_1}{k}{v} = s_2 \land \EQ{x_{ik}}{k}\\
B &: a=s_1 \land s_2=b \land \lnot\EQ{x_{ik}}{i} \land i=j \land \select{a}{i}\neq\select{b}{j} 
\enspace.
\end{align*}
An interpolant should reflect the information that \(s_1\) and \(s_2\) can differ at most at one index satisfying the \tEQ term.
Using \tdiff, we can express the interpolant
\begin{equation*}
  \mI \equiv (s_1=s_2 \lor \EQ{x_{ik}}{\diff{s_1}{s_2}})
  \land s_1\rewrite[\smash{1}] s_2=s_2\enspace.
\end{equation*}
\end{example}

To generalize this idea, we define inductively over \(m\geq 0\) for the arrays \(a\) and \(b\), and a formula \(F(\cdot)\) with one free parameter:
\begin{equation*}
\begin{aligned}
\weq{a}{b}{0}{F(\cdot)}	&\equiv a=b\\
\weq{a}{b}{m+1}{F(\cdot)} &\equiv (a=b \lor F(\diff{a}{b})) \: \land\:
\weq{a\smashrewrite b}{b}{m}{F(\cdot)}
\enspace.
\end{aligned}
\end{equation*}
The formula \(\weq{a}{b}{m}{F(\cdot)}\) states that arrays \(a\) and \(b\) differ at most at \(m\) indices and that each index $i$ where they differ satisfies the formula \(F(i)\).

\paragraph{Algorithm.}
For an \A-path \(\pi:s_1\weakeqi s_2\), we count the number of stores $|\pi|:=|\Stores{\pi}|$.
Each index $i$ where $s_1$ and $s_2$ differ must satisfy \(F^A_\pi(i)\) as defined in Sect.~\ref{sec:interpolants_readoverweakeq_shared}.
There is nothing to do for \B-paths.

\begin{lemma}\label{the:correctness_readoverweakeq_blocal}
  A partial interpolant of
  the lemma $\Cond{a \weakeqi b} \land i = j \rightarrow \select{a}{i} = \select{b}{j}$
  with \B-local $i$ and $j$ is
\begin{equation*}\label{eq:interpolants_readoverweakeq_blocal}
\mI \equiv \bigwedge_{(\pi:s_1\weakeqi s_2)\in\text{A-paths}}\hspace{-1.5em}\bigweq{s_1}{s_2}{|\pi|}{F^A_\pi(\cdot)}\enspace.
\end{equation*}
\end{lemma}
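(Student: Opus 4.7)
The plan is to verify the three standard interpolant conditions, mirroring the structure of the proof of Lemma~\ref{the:correctness_readoverweakeq_shared}: the $\weq$ predicate here takes on the role previously played by the shared term $x$ for $i=j$. The symbol condition is immediate: each conjunct $\weq{s_1}{s_2}{|\pi|}{F^A_\pi(\cdot)}$ mentions only the shared path boundaries $s_1,s_2$ (hence only shared \tdiff terms), together with $F^A_\pi$, whose atoms involve only shared store indices and the auxiliary variables $x_{ik}$ from mixed index disequalities.

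For $\lnot C\project{A}\Rightarrow \mI$, I fix an A-path $\pi:s_1\weakeqi s_2$ with $\Stores{\pi}=\{k_1,\dots,k_m\}$. The A-local store and equality edges force $s_1$ and $s_2$ to coincide outside $\{k_1,\dots,k_m\}$, so any diff index encountered while unfolding $\weq$ lies in $\Stores{\pi}$. Because $i$ is \B-local, each literal $i\neq k_j$ in the conflict is either \B-local (when $k_j$ is shared) or mixed (when $k_j$ is \A-local). In the first case the disjunct $x=k_j$ of $F^A_\pi$ instantiated at $k_j$ is trivially true; in the second case, $A$ carries the projection $\EQ{x_{ik_j}}{k_j}$, so the disjunct $\EQ{x_{ik_j}}{x}$ at $x=k_j$ is true. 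Every actual diff therefore satisfies $F^A_\pi$, and the definition of $\weq$ yields the required conjunct.

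For $\lnot C\project{B}\land \mI$ to be unsatisfiable, I first observe that $F^A_\pi(i)$ is false on the \B-side for every A-path: a disjunct $i=k$ contradicts the \B-local disequality $i\neq k$ in $\lnot C\project{B}$, and a disjunct $\EQ{x_{ik}}{i}$ contradicts the \B-projection $\lnot\EQ{x_{ik}}{i}$. A simple induction on $m$ then gives that $\weq{a}{b}{m}{F(\cdot)}\land \lnot F(i)$ entails $\select{a}{i}=\select{b}{i}$, since the rewrite step $a\rewrite b$ only changes the index $\diff{a}{b}$, which is forbidden to equal $i$. Hence each A-path summary delivers $\select{s_1}{i}=\select{s_2}{i}$ at its boundaries, and each B-path gives the same conclusion directly by rule~\eqref{eq:rule_readoverweakeq} from the \B-local (dis)equalities along it. Chaining these equalities along the alternation of A- and B-paths, together with $i=j$, produces $\select{a}{i}=\select{b}{j}$, contradicting the remaining literal $\select{a}{i}\neq\select{b}{j}$ of $\lnot C\project{B}$.

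I expect the main obstacle to be the auxiliary induction $\weq{a}{b}{m}{F(\cdot)}\land\lnot F(i)\vdash \select{a}{i}=\select{b}{i}$: this is exactly the step that justifies using $\weq$ as a sound summary of an A-path when no shared term for $i=j$ is available, and it is what replaces the explicit use of a shared index in the proof of Lemma~\ref{the:correctness_readoverweakeq_shared}.
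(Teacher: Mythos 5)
Your proof is correct and follows essentially the same route as the paper: diff indices obtained while unfolding $\mathrm{weq}$ must lie in $\Stores{\pi}$ and satisfy $F^A_\pi$ (trivially for \B-local disequalities, via $\EQ{x_{ik}}{k}$ from the \A-projection for mixed ones), while on the \B-side the projections refute $F^A_\pi(i)$, so the \A-path summaries and \eqref{eq:rule_readoverweakeq} on \B-paths chain to a contradiction with $\select{a}{i}\neq\select{b}{j}$ and $i=j$. Your auxiliary induction that $\weq{a}{b}{m}{F(\cdot)}\land\lnot F(i)$ entails $\select{a}{i}=\select{b}{i}$ just makes explicit the semantic reading of $\mathrm{weq}$ that the paper invokes directly.
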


\begin{proof}
  The symbol condition holds by the same argument as in Lemma~\ref{the:correctness_readoverweakeq_shared}.

\(\lnot C\project{A}\) implies \(\mI\):
Let \(\pi:s_1\weakeqi s_2\) be an \A-path on \(P\).
The path $\pi$ shows that $s_1$ and $s_2$ can differ at most at \(|\pi|\) indices, hence \(s_1\rewrite[\smash{|\pi|}] s_2 = s_2\) follows from \(\lnot C\project{A}\).
If \(s_1\rewrite[m] s_2 \neq s_2\) holds for $m<|\pi|$, then \(\diff{s_1\rewrite[m] s_2}{s_2} = k\) for some \(k \in \Stores{\pi}\).
If \(i \neq k\) is \A-local, then $k=k$ holds trivially, if \(i \neq k\) is mixed, then \(\EQ{x_{ik}}{k}\) is part of $\lnot C \project{A}$.
Hence, \(s_1\rewrite[m] s_2 = s_2 \lor F^A_\pi(\diff{s_1\rewrite[m] s_2}{s_2})\) holds for all $m <|\pi|$.
This shows \weq{s_1}{s_2}{|\pi|}{F^A_\pi(\cdot)}.

\(\lnot C\project{B}\land\mI\) is unsat:
For every \B-path \(\pi:s_1\weakeqi s_2\) on \(P\), we get \(\select{s_1}{i}=\select{s_2}{i}\) with \eqref{eq:rule_readoverweakeq}.
For every \A-path \(\pi:s_1\weakeqi s_2\), \(\mI\) implies that \(s_1\) and \(s_2\) differ at finitely many indices which all satisfy \(F^A_{\pi}(\cdot)\). 
The disequalities and \B-projections in \B imply that \(i\) does not satisfy \(F^A_{\pi}(i)\), and therefore \(\select{s_1}{i}=\select{s_2}{i}\).
Then \(\select{a}{i}=\select{b}{i}\) holds by transitivity, in contradiction to \(\select{a}{i}\neq\select{b}{j}\) and \(i=j\) in \B.
\qed
\end{proof}

\subsection{Both $i$ and $j$ are \A-local}
\label{sec:interpolants_readoverweakeq_alocal}

The interpolant is dual to the previous case and we define the dual of weq for arrays \(a,b\), a number \(m\geq 0\) and a formula \(F\):
\begin{align*}
\nweq{a}{b}{0}{F(\cdot)}  &\equiv a\neq b\\
\nweq{a}{b}{m+1}{F(\cdot)}&\equiv (a\neq b \land F(\diff{a}{b})) \:
\lor \: \nweq{a\smashrewrite b}{b}{m}{F(\cdot)}
\enspace.
\end{align*}
The formula \(\nweq{a}{b}{m}{F(\cdot)}\) expresses that either one of the first \(m\) indices $i$ found by stepwise rewriting \(a\) to \(b\) satisfies the formula \(F(i)\), or \(a\) and \(b\) differ at more than \(m\) indices.
Like in Sect.~\ref{sec:interpolants_readoverweakeq_shared_a}, the lemma interpolant is dual to the one computed in Sect.~\ref{sec:interpolants_readoverweakeq_blocal}.

\begin{lemma}\label{the:correctness_readoverweakeq_alocal}
  A partial interpolant of
  the lemma $\Cond{a \weakeqi b} \land i = j \rightarrow \select{a}{i} = \select{b}{j}$
  with \A-local $i$ and $j$ is
\(
\mI\equiv\bigvee_{(\pi:s_1\weakeqi s_2)\in\text{B-paths}} \nweq{s_1}{s_2}{|{\pi}|}{F^B_{\pi}(\cdot)}
\).
\end{lemma}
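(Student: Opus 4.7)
The plan is to mirror the proof of Lemma~\ref{the:correctness_readoverweakeq_blocal}, with the roles of \A and \B swapped and \weq replaced by \nweq (conjunctions over \A-paths turning into disjunctions over \B-paths), in the same spirit as the dualization used from Sect.~\ref{sec:interpolants_readoverweakeq_shared} to Sect.~\ref{sec:interpolants_readoverweakeq_shared_a}. I would structure the argument in the three standard parts: symbol condition, \(\lnot C\project{A} \Rightarrow \mI\), and unsatisfiability of \(\lnot C\project{B} \land \mI\).

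For the symbol condition, I observe that \(\mI\) only mentions the shared boundary arrays \(s_1,s_2\) of \B-paths, their iterated \tdiff terms (shared, since \tdiff of shared arrays is shared), shared store indices on \B-paths, and auxiliary variables \(x_{ik}\) arising from mixed index disequalities. Each \(x_{ik}\) occurs only as the first argument of a positive \tEQ predicate inside \(F^B_\pi\), matching the requirement from Sect.~\ref{sec:prooftreepreservinginterpolation}.

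For \(\lnot C\project{B}\land\mI\) unsatisfiable, assume some disjunct \(\nweq{s_1}{s_2}{|\pi|}{F^B_\pi(\cdot)}\) holds for a \B-path \(\pi:s_1\weakeqi s_2\). From the \B-part we get that \(s_1\) and \(s_2\) differ at most at indices in \(\Stores{\pi}\), hence \(s_1\rewrite[{|\pi|}] s_2 = s_2\); the inductive definition of \nweq then forces some \(m<|\pi|\) with \(s_1\rewrite[m]s_2\neq s_2\) and \(F^B_\pi(d)\) for \(d := \diff{s_1\rewrite[m]s_2}{s_2}\). Since \(d\) is an actual diff index, \(d\strongeq k\) for some \(k\in\Stores{\pi}\). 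If \(k\) is shared then \(i\neq k\) is \A-local, \(F^B_\pi(d)\) contains the conjunct \(d\neq k\), contradicting \(d=k\). If \(k\) is \B-local then \(i\neq k\) is mixed, \(F^B_\pi(d)\) contains \(\EQ{x_{ik}}{d}\), and \(\lnot C\project{B}\) contains the \B-projection \(\lnot\EQ{x_{ik}}{k}\); congruence on \(d=k\) yields a contradiction.

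For \(\lnot C\project{A}\Rightarrow\mI\), I argue the contrapositive. Assume \(\lnot\mI\): for every \B-path \(\pi:s_1\weakeqi s_2\), \(\nweq{s_1}{s_2}{|\pi|}{F^B_\pi(\cdot)}\) is false, which unfolds to \(s_1\rewrite[{|\pi|}]s_2=s_2\) and, for each \(m<|\pi|\), either the arrays already coincide or \(\lnot F^B_\pi(d_m)\) with \(d_m := \diff{s_1\rewrite[m]s_2}{s_2}\). In the latter case I claim \(d_m\neq i\): if a conjunct \(d_m\neq k\) fails, then \(d_m=k\) and \(i\neq k\) is \A-local (available in \(\lnot C\project{A}\)), so \(d_m\neq i\); if a conjunct \(\EQ{x_{ik}}{d_m}\) fails, then combined with the \A-projection \(\EQ{x_{ik}}{i}\) of the mixed disequality in \(\lnot C\project{A}\), congruence on \tEQ yields \(d_m\neq i\). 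Hence \(i\) is avoided at every rewriting step, giving \(\select{s_1}{i}=\select{s_2}{i}\) by induction on \(m\). Applying rule~\eqref{eq:rule_readoverweakeq} along the \A-paths then derives \(\select{s}{i}=\select{s'}{i}\) for their endpoints, so by transitivity \(\select{a}{i}=\select{b}{i}\); combined with \(i=j\) in \(\lnot C\project{A}\), this contradicts \(\select{a}{i}\neq\select{b}{j}\).

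The main obstacle will be the first implication: carefully aligning the semantic diff index \(d\) with a specific syntactic \(k\in\Stores{\pi}\) so that the correct conjunct of \(F^B_\pi(d)\) is selected, and verifying that the \tEQ-congruence step respects the proof-tree preserving symbol condition. Once this bookkeeping is handled, the induction on rewriting steps in the second implication is routine and parallels the argument in Lemma~\ref{the:correctness_readoverweakeq_blocal}.
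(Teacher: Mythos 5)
Your proposal is correct and matches the paper's intent: the paper states this lemma without an explicit proof, relying on it being the dual of Lemma~\ref{the:correctness_readoverweakeq_blocal} (with \A/\B swapped and weq replaced by nweq), which is exactly the argument you carry out, including the right case split on shared versus \B-local store indices and the use of the \tEQ-projections \(\EQ{x_{ik}}{i}\) and \(\lnot\EQ{x_{ik}}{k}\). The only glossed point — that rewrite steps taken after the arrays already coincide are no-ops, so the induction on \(m\) goes through — is glossed to the same degree in the paper's own proof of the dual lemma.
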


\begin{theorem}\label{the:correctness_readoverweakeq}
For all instantiations of the rule~\eqref{eq:rule_readoverweakeq}, quantifier-free interpolants can be computed as described in Sects.~\ref{sec:interpolants_readoverweakeq_shared}--\ref{sec:interpolants_readoverweakeq_alocal}.
\end{theorem}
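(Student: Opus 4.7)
The plan is to derive Theorem~\ref{the:correctness_readoverweakeq} by combining Lemmas~\ref{the:correctness_readoverweakeq_shared}--\ref{the:correctness_readoverweakeq_alocal} via an exhaustive case analysis on the interpolation status of the two distinguished literals $i=j$ and $\select{a}{i}=\select{b}{j}$ occurring in a given instance of \eqref{eq:rule_readoverweakeq}. Nothing new has to be proved about any particular case; the work is to check that the four cases (i)--(iv) cover every possible partitioning and that each produces a quantifier-free formula.

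First I would argue exhaustiveness. A shared term for $i=j$ exists exactly when $i$ is shared, $j$ is shared, or $i=j$ is mixed (in which case the auxiliary variable $x_{ij}$ from Sect.~\ref{sec:prooftreepreservinginterpolation} is shared). If none of these holds, then neither $i$ nor $j$ is shared and the equality $i=j$, belonging to either $A$ or $B$, forces both $i$ and $j$ to lie on that same side, giving cases (iii) and (iv). When a shared term does exist, the select disequality $\select{a}{i}=\select{b}{j}$ is either $A$-local (case (ii)) or else lies in $B$ or is mixed (case (i)); these options are clearly disjoint, so the split is total.

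Second I would verify that each construction produces a quantifier-free $\Sigma_\Arr$-formula over shared symbols and proof-tree-preserving auxiliary variables. The weak path $P$ decomposes into a finite alternation of $A$- and $B$-subpaths meeting at shared boundary arrays; the summaries $F^A_\pi$ and $F^B_\pi$ are finite Boolean combinations of (dis)equalities and EQ-atoms over shared and auxiliary terms; and the operators $\mathrm{weq}$ and $\mathrm{nweq}$ unfold by their inductive definitions into finite quantifier-free formulas built from $\mathrm{diff}$, array equalities and Boolean connectives. With quantifier-freeness confirmed, invoking Lemma~\ref{the:correctness_readoverweakeq_shared} in case (i), Lemma~\ref{the:correctness_readoverweakeq_shared_a} in case (ii), Lemma~\ref{the:correctness_readoverweakeq_blocal} in case (iii), and Lemma~\ref{the:correctness_readoverweakeq_alocal} in case (iv) discharges the implication, unsatisfiability and EQ-symbol conditions, establishing the theorem.

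The principal obstacle I expect is bookkeeping around the boundary sub-cases rather than any substantive theory content. In particular, when both $i$ and $j$ happen to be shared the choice of shared term $x$ is not unique and the side on which $i=j$ is recorded interacts subtly with the side on which $\select{a}{i}=\select{b}{j}$ occurs; this is exactly what the optional conjunct $[\land i=j]$ in Lemma~\ref{the:correctness_readoverweakeq_shared} and the dual $[\lor i\neq j]$ in Lemma~\ref{the:correctness_readoverweakeq_shared_a} are designed to handle. I would fix the canonical choice $x := i$ if $i$ is shared, else $x := j$, else $x := x_{ij}$, and then verify that under this convention each literal configuration falls into one and only one of the four cases without leaving a gap at the seams. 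Once this has been checked, the theorem follows directly.
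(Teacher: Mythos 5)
Your proposal matches the paper's (implicit) argument: the theorem is stated as a direct consequence of the four-case split announced at the start of Sect.~\ref{sec:interpolants_readoverweakeq} together with Lemmas~\ref{the:correctness_readoverweakeq_shared}--\ref{the:correctness_readoverweakeq_alocal}, which is exactly the exhaustive case analysis and quantifier-freeness check you carry out. Your extra bookkeeping on the boundary sub-cases (choice of $x$ and the optional $i=j$ / $i\neq j$ conjuncts) is consistent with how the lemmas themselves handle those situations, so nothing is missing.
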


\section{Interpolants for Weakeq-Ext Lemmas}
\label{sec:interpolants_weakeqext}

A conflict corresponding to a lemma of type \eqref{eq:rule_weakeqext} is of the form
\[ \Cond{a\weakpath b} \land
\bigwedge_{i\in\Stores{P}}\hspace{-1.5em} \Cond{a\weakcongi b} \land a \neq b
\enspace.
\]
The main path $P$ shows that $a$ and $b$ differ at most at the indices in $\Stores{P}$, and $a \weakcongi b$ (called \(i\)-path as of now) shows that $a$ and $b$ do not differ at index $i$.

To compute an interpolant, we summarize the main path by weq (or nweq) terms to capture the indices where \(a\) and \(b\) can differ, and include summaries for the \(i\)-paths that are similar to the interpolants in Sect.~\ref{sec:interpolants_readoverweakeq}.
The \(i\)-paths can contain a select edge \(a' \selectedgei[k_1,k_2] b'\) where  $\select{a'}{k_1} \strongeq \select{b'}{k_2}$, \(i\strongeq k_1\), and \(i\strongeq k_2\).
In the \B-local case~\ref{sec:interpolants_readoverweakeq_blocal}, \B-local select edges make no difference for the construction, as the weq formulas are built over \A-paths, and analogously for the \A-local case~\ref{sec:interpolants_readoverweakeq_alocal}. 
However, if there are \A-local select terms \(\select{a'}{k}\) in the \B-local case or vice versa, then $k$ is shared or the index equality
\(i=k\) is mixed and we can use $k$ or the auxiliary variable \(x_{ik}\) and proceed as in the cases where there is a shared term. 

We have to adapt the interpolation procedures in Sects.~\ref{sec:interpolants_readoverweakeq_shared} and \ref{sec:interpolants_readoverweakeq_shared_a} by adding the index equalities that pertain to a select edge, analogously to the index disequality for a store edge.
More specifically, we add to \(F^A_\pi(x)\) a disjunct \(x\neq k\) for each \B-local \(i=k\) on an \A-path, and \(x\neq x_{ik} \) for each mixed \(i=k\).
Here, \(x\) is the shared term for the \(i\)-path index \(i\).
For \B-paths we add to \(F^B_\pi(x)\) a conjunct \(x=k\) for each \A-local \(i=k\) and \(x=x_{ik}\) for each mixed \(i=k\).
Moreover, if there is a mixed select equality \(\select{a'}{k_1}=\select{b'}{k_2}\) on the \(i\)-path, the auxiliary variable \(x_{{\select{a'}{k_1}}{\select{b'}{k_2}}}\) is used in the summary for the subpath instead of $\select{s}{x}$, i.e., we get a term of the form \(\select{s_1}{x} = x_{{\select{a'}{k_1}}{\select{b'}{k_2}}}\) in \ref{sec:interpolants_readoverweakeq_shared}, and analogously for \ref{sec:interpolants_readoverweakeq_shared_a}.

\iflongversion{
\begin{lemma}
  Instead of a weak equivalence \(a \weakeqi b\), let the lemmas in Sect.~\ref{sec:interpolants_readoverweakeq} contain a weak congruence \(a \weakcongi b\).
  Then the modified methods described above give correct partial interpolants.
\end{lemma}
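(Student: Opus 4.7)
The plan is to verify the four subcases of Sects.~\ref{sec:interpolants_readoverweakeq_shared}--\ref{sec:interpolants_readoverweakeq_alocal} almost verbatim, after absorbing the select edge into the A/B-path decomposition of the conflict. First, observe that $a\weakcongi b$ is either already a weak equivalence on $i$ (in which case Lemmas~\ref{the:correctness_readoverweakeq_shared}--\ref{the:correctness_readoverweakeq_alocal} apply unchanged), or decomposes as $a\weakeqi a' \selectedgei[k_1,k_2] b' \weakeqi b$ with $\select{a'}{k_1}\strongeq\select{b'}{k_2}$ and $i\strongeq k_1\strongeq k_2$. I would assign the select edge to A, to B, or split it as mixed (introducing the auxiliary variable $x_{\select{a'}{k_1}\select{b'}{k_2}}$ at the boundary) in the same canonical way in which equality and store edges are assigned in the original construction. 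The $i$-path then remains an alternation of A- and B-paths joined at shared or auxiliary boundary terms.

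Next, for each of the four subcases I would re-run the proof of the corresponding original lemma with the modified $F^A_\pi$ and $F^B_\pi$. The index equalities $i=k_j$ attached to a select edge play the role dual to the store-index disequalities $i\neq k$: in the implication direction $\lnot C\project{A} \Rightarrow \mI$, a B-local $i=k_j$ on an A-path carrying the select edge is not available in the A projection, so the extra disjunct $x\neq k_j$ in $F^A_\pi(x)$ absorbs it --- if $F^A_\pi(x)$ fails then $x=k_j$ holds and the select edge propagates the desired equality at $x$; a mixed $i=k_j$ is handled analogously by the disjunct $x\neq x_{ik_j}$. Symmetrically, on a B-path the extra conjuncts $x=k_j$ or $x=x_{ik_j}$ in $F^B_\pi(x)$ are exactly what the unsatisfiability direction needs to apply~\eqref{eq:rule_readoverweakeq} on the subpath. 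In the two unshared cases (Sects.~\ref{sec:interpolants_readoverweakeq_blocal} and~\ref{sec:interpolants_readoverweakeq_alocal}), if the select edge lies on the same side as the weq/nweq summary its contribution is confined to the additional index literals in the corresponding $F$ formula and the summary construction is unchanged; if it lies on the opposite side, the select edge supplies a shared term ($k_j$ or the auxiliary $x_{ik_j}$) for $i$, reducing this case to the shared-term treatment of Sect.~\ref{sec:interpolants_readoverweakeq_shared} or~\ref{sec:interpolants_readoverweakeq_shared_a}, as already noted in the text.

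The main obstacle is the subcase of a mixed select equality $\select{a'}{k_1}=\select{b'}{k_2}$ in the presence of a shared term for $i$: the summary of the path carrying the select edge must now use the fresh auxiliary variable $x_{\select{a'}{k_1}\select{b'}{k_2}}$ in place of the boundary term $\select{s}{x}$, mirroring the treatment of the mixed disequality $\select{a}{i}\neq\select{b}{j}$ in cases (ii)/(iii) of Sect.~\ref{sec:interpolants_readoverweakeq_shared}. One has to verify the symbol condition of proof tree preserving interpolation, namely that every occurrence of this new auxiliary variable appears only as the first argument of an EQ predicate that occurs positively in the resulting partial interpolant. This is ensured by splitting the path at the mixed select boundary exactly as the original construction splits at mixed array equalities; with this in hand, the remaining verification reduces to a direct repetition of the proofs of Lemmas~\ref{the:correctness_readoverweakeq_shared} and~\ref{the:correctness_readoverweakeq_shared_a} with $\select{a'}{k_1}$ and $\select{b'}{k_2}$ playing the roles of $\select{a}{i}$ and $\select{b}{j}$.
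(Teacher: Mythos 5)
Your overall plan coincides with the paper's proof: decompose $a\weakcongi b$ into weak-equivalence subpaths joined by the select edge, assign or split that edge like any other edge, and then re-run the four proofs of Lemmas~\ref{the:correctness_readoverweakeq_shared}--\ref{the:correctness_readoverweakeq_alocal} with the augmented $F^A_\pi$ and $F^B_\pi$, falling back to the shared-term cases whenever the select edge yields a shared index term. The gap is in your last paragraph, i.e.\ exactly at the one genuinely new ingredient, the \emph{mixed select equality} $\select{a'}{k_1}=\select{b'}{k_2}$. The EQ symbol condition of proof tree preserving interpolation is imposed only for auxiliary variables of mixed \emph{disequalities} occurring in $\lnot C$ (literals that are positive equalities in the clause $C$), such as $\select{a}{i}\neq\select{b}{j}$ and the store-index disequalities $i\neq k$; those are projected via $\EQ{\cdot}{\cdot}$. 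The select-edge equality, like the index equalities $i=k_1,i=k_2$ on the $i$-path, occurs as an \emph{equality} in $\lnot C$, so its auxiliary variable is projected by the plain equations $\select{a'}{k_1}=x_{\select{a'}{k_1}\select{b'}{k_2}}$ and $x_{\select{a'}{k_1}\select{b'}{k_2}}=\select{b'}{k_2}$ and may occur freely in the partial interpolant. Indeed the modified construction puts it into the plain summary $\select{s_1}{x}=x_{\select{a'}{k_1}\select{b'}{k_2}}$ (and puts $x= x_{ik}$, $x\neq x_{ik}$, not EQ atoms, into $F^B_\pi$, $F^A_\pi$), which would \emph{violate} the condition you propose to verify; your own justification ("split exactly as at mixed array equalities") also produces free occurrences, contradicting the condition you state, so the paragraph is not a repair but a misapplied side condition.

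What actually has to be checked there, and what your sketch omits, is the two-directional argument for the path carrying the mixed select edge: for $(\lnot C)\project{A}\Rightarrow \mI$, from $\lnot F^A_\pi(x)$ and the $A$-projections of $i=k_1$ one derives $x=k_1$, hence $\select{s_1}{x}=\select{a'}{x}=\select{a'}{k_1}=x_{\select{a'}{k_1}\select{b'}{k_2}}$; for $(\lnot C)\project{B}\land\mI$ unsat, the new conjuncts of $F^B_\pi(x)$ together with the $B$-projections of $i=k_2$ and of the select equality give $x_{\select{a'}{k_1}\select{b'}{k_2}}=\select{b'}{x}=\select{s_2}{x}$, after which transitivity closes the chain as in Lemma~\ref{the:correctness_readoverweakeq_shared}. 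A further, minor, inaccuracy: in the cases of Sects.~\ref{sec:interpolants_readoverweakeq_blocal} and~\ref{sec:interpolants_readoverweakeq_alocal} a select edge whose select terms are local to the \emph{other} part always forces a shared term ($k_1$, $k_2$ or $x_{ik}$), so one never stays in those cases with such an edge; the only select edges that remain there are local to the same part as $i$, and for those one must still note (as the paper does) that the congruence across the edge, combined with the local index equalities, propagates the value along that path in the unsatisfiability direction --- it is not an application of \eqref{eq:rule_readoverweakeq} alone.
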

\begin{proof}
If the  weak congruence \(a \weakcongi b\) does not contain a select edge, the methods in Sect.~\ref{sec:interpolants_weakeqext} can be used unchanged.
We assume now that the weak congruence path \(a \weakcongi b\) consists of two weak equivalence paths \(a \weakeqi a'\) and \(b' \weakeqi b\) and the select edge \(a' \selectedgei[k_1,k_2] b'\) with \(\select{a'}{k_1} \strongeq \select{b'}{k_2}\), \(i\strongeq k_1\), and \(i\strongeq k_2\).

Let \I be the interpolant constructed as described above.
We give a proof for the modified methods based on \ref{sec:interpolants_readoverweakeq_shared} and \ref{sec:interpolants_readoverweakeq_blocal} where \(\select{a}{i} \neq \select{b}{i}\) is in \B or mixed; the other cases follow dually.

\(\lnot C \project{A}\) implies \(\mI\) (Case~\ref{sec:interpolants_readoverweakeq_blocal}):
The select edge \(a' \selectedgei[k_1,k_2] b'\) lies on a \B-path, and the index equalities \(i\strongeq k_1\) and \(i\strongeq k_2\) are \B-local (otherwise, we would get a shared term from the select index equalities and use case \ref{sec:interpolants_readoverweakeq_shared}).
Hence, the interpolant is constructed as in \ref{sec:interpolants_readoverweakeq_blocal} and \(\lnot C \project A\) implies \(\mI\) by Lemma~\ref{the:correctness_readoverweakeq_blocal}.

\(\lnot C \project{B}\land \mI\) is unsat (Case~\ref{sec:interpolants_readoverweakeq_blocal}):
For the \B-path \(s_1 \weakcongi s_2\) containing the select edge \(\select{a'}{k_1} \strongeq \select{b'}{k_2}\), we get \(\select{s_1}{i} = \select{a'}{i}\) and \(\select{b'}{i} = \select{s_2}{i}\) by \eqref{eq:rule_readoverweakeq} as in the original proof.
With congruence, \(\select{a'}{k_1} = \select{b'}{k_2}\), \(i=k_1\) and \(i=k_2\), it follows \(\select{s_1}{i} = \select{s_2}{i}\).
Proceeding as in the proof in Lemma~\ref{the:correctness_readoverweakeq_blocal} for the other paths yields the contradiction.

\(\lnot C \project{A}\) implies \(\mI\) (Case~\ref{sec:interpolants_readoverweakeq_shared}):
Let \(x\) be the shared term for \(i=j\) as in Lemma~\ref{the:correctness_readoverweakeq_shared}.
As in the original proof, \(i=x\) follows from the \A-part if \(i\) is in \A.
If the select edge lies on a \B-path \(\pi\) and the select index equality \(i=k\) is \A-local or mixed for \(k = k_1, k_2\), the conjunct \(x=k\) or \(x=x_{ik}\) is added to \(F^B_\pi(x)\).
In this case, \(i\) must be in \A and the equalities follow by replacing \(i\) with \(x\) in the corresponding equality or \A-projection.
If the select edge lies on an \A-path \(\pi: s_1 \weakcongi s_2\), we need to show the corresponding path summary \(\select{s_1}{x}=\select{s_2}{x}\lor F^A_\pi(x)\).
If \(F^A_\pi(x)\) does not hold,
we get \(\select{s_1}{x}=\select{a'}{x}\) and \(\select{b'}{x}=\select{s_2}{x}\) from \A as before.
Also \(x = k\) (for $k=k_1,k_2$) follows from the negation of $F^A_\pi(x)$ and the \A-projection of $i = k$.
Hence, \(\select{s_1}{x}=\select{a'}{x}=\select{a'}{k_1}=\select{b'}{k_2}=\select{b'}{x}=\select{s_2}{x}\).
If the select edge is mixed and w.l.o.g.\ \(\select{a'}{k_1}\) is \A-local, the corresponding \A-path is summarized as \(\select{s_1}{x} = x_{{\select{a'}{k_1}}{\select{b'}{k_2}}}\lor F^A_\pi(x)\).
Again we can derive \(\select{s_1}{x} = \select{a'}{x}\) and \(x = k_1\) if \(F^A_\pi(x)\) does not hold.
Using the \A-projection of the mixed select equality we get \(\select{s_1}{x} = x_{{\select{a'}{k_1}}{\select{b'}{k_2}}}\) as desired.
For the corresponding \B path we note that \(x = k_2\), resp.\ \(x = x_{{i}{k_2}}\) follow from \(i=x\) if \(i = k_2\) is \A-local or mixed.
Together with the original proof for all other paths, \(\lnot C \project A\) implies \(\mI\).

\(\lnot C \project B \land \mI\) is unsat (Case~\ref{sec:interpolants_readoverweakeq_shared}):
As in the proof for Lemma~\ref{the:correctness_readoverweakeq_shared}, \(i=x\) follows from the \B-part if \(i\) is in \B.
If the select edge lies on a \B-path \(\pi: s_1 \weakcongi s_2\), we get \(\select{s_1}{x}=\select{a'}{x}\) and \(\select{b'}{x}=\select{s_2}{x}\) as in the proof for Lemma~\ref{the:correctness_readoverweakeq_shared}.
The \B-projections of the index equalities \(i=k_1\) and \(i=k_2\) together with \(F^B_\pi(x)\) and the select equality \(\select{a'}{k_1}=\select{b'}{k_2}\) yield \(\select{s_1}{x}=\select{s_2}{x}\).
If the select edge lies on an \A-path \(\pi: s_1 \weakcongi s_2\), we can show that \(F^A_\pi(x)\) cannot hold by using the \B-projections of the index equalities and disequalities on this path.
Hence, \(\select{s_1}{x}=\select{s_2}{x}\) follows from the interpolant.
If the select edge is mixed and w.l.o.g.\ \(\select{a'}{k_1}\) is \A-local, the same argument yields \(\select{s_1}{x}=x_{{\select{a'}{k_1}}{\select{b'}{k_2}}}\).
For the \B-path starting at the mixed variable, we get \(\select{b'}{x}=\select{s_2}{x}\) as in the proof for Lemma~\ref{the:correctness_readoverweakeq_shared}.
Using the conjunct in \(F^B_\pi(x)\) and the \B-projection of the index equality \(i=k_2\) together with the \B-projection of \(\select{a'}{k_1}=\select{b'}{k_2}\) gives \(x_{{\select{a'}{k_1}}{\select{b'}{k_2}}}=\select{s_2}{x}\).
Proceeding as in the original proof for the other paths yields the contradiction.
\qed
\end{proof}
}\fi

\smallskip

For \eqref{eq:rule_weakeqext} lemmas, we distinguish three cases:
(i) \(a= b\) is in \B, (ii) \(a= b\) is \A-local, or (iii) \(a= b\) is mixed.

\subsection{$a= b$ is in \B}
\label{sec:interpolants_weakeqext_blocal}

If the literal \(a= b\) is in \B, the \A-paths both on the main store path and on the weak paths have only shared path ends.
Hence, we summarize \A-paths similarly to Sects.~\ref{sec:interpolants_readoverweakeq_shared}~and~\ref{sec:interpolants_readoverweakeq_blocal}.

\paragraph{Algorithm.}
Divide the main path $a\weakpath b$ into \A-paths and \B-paths.
For each \(i\in\Stores{P}\) on a \B-path, summarize the corresponding \(i\)-path as in Sects.~\ref{sec:interpolants_readoverweakeq_shared} or \ref{sec:interpolants_readoverweakeq_blocal}.
The resulting formula is denoted by \(\mI_i\).
For an \A-path $s_1\weakpath[\pi] s_2$ use a weq formula to state that
each index where $s_1$ and $s_2$ differ satisfies 
  \(\mI_{i}(\cdot)\) for some $i \in \Stores{\pi}$ where
\(\mI_{i}\) is computed as in \ref{sec:interpolants_readoverweakeq_shared} with the shared term $\cdot$ for $i=j$.  If $i$ is also shared we add $i=\cdot$ to the interpolant.

\begin{lemma}\label{the:correctness_weakeqext_blocal}
  The lemma
$\Cond{a\weakpath b} \land
  \bigwedge_{i\in\Stores{P}} \Cond{a\weakcongi b} \rightarrow a = b$
  where $a = b$ is in $B$ has the partial interpolant
\begin{equation*}\label{eq:interpolants_weakeqext_blocal}
  \mI\equiv\bigwedge_{\substack{i\in\Stores{\pi}\\\pi\in\text{\B-paths}}} \hspace{-1em} \mI_i\quad \land \bigwedge_{(s_1\weakpath[\pi]s_2)\in\text{\A-paths}}\hspace{-1em}\bigweq{s_1}{s_2}{|{\pi}|}{\bigvee_{i\in\Stores{\pi}} \hspace{-1em} \big(\mI_{i}(\cdot)\;[{}\land i = \cdot]\big)}\enspace.
\end{equation*}

\end{lemma}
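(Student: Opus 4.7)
The plan is to mirror the three-step structure used for Lemmas~\ref{the:correctness_readoverweakeq_shared}--\ref{the:correctness_readoverweakeq_blocal}: verify the symbol condition, show $\lnot C \project{A}$ implies $\mI$, and show $\lnot C \project{B} \land \mI$ is unsatisfiable. For the symbol condition, each sub-interpolant $\mI_i$ already satisfies its own symbol condition by Lemmas~\ref{the:correctness_readoverweakeq_shared} and~\ref{the:correctness_readoverweakeq_blocal} (together with the select-edge refinement discussed just before this lemma). The surrounding $\mathrm{weq}$ construction only introduces the shared boundary arrays $s_1,s_2$ of the A-subpaths of the main path, $\mathrm{diff}$ terms between shared arrays, and shared store indices (or, when mixed, the admissible auxiliary variables $x_{ik}$), so only shared symbols and legal auxiliaries can appear in $\mI$.

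For $\lnot C \project{A} \Rightarrow \mI$, I would handle each conjunct separately. For each $i$ on a B-path, the A-projection of the sub-conflict $a \weakcongi b$ implies $\mI_i$ by the previous read-over-weakeq lemmas, applied to the $i$-path. For each A-path $s_1 \weakpath[\pi] s_2$ on the main store path, $\pi$ already witnesses $s_1 \rewrite[|\pi|] s_2 = s_2$, which discharges the terminal conjunct of $\mathop{\mathrm{weq}}$. For each intermediate step $m<|\pi|$ with $s_1 \rewrite[m] s_2 \neq s_2$, the diff index $\diff{s_1\rewrite[m] s_2}{s_2}$ must, by $\pi$, equal some store index $i \in \Stores{\pi}$ (up to the A-local or mixed index equalities already recorded in $F^A_\pi$). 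Instantiating the parameter $\cdot$ of $\mI_i$ by this diff term plays the role of the shared term for the $i$-path index, and the A-projection of the $i$-path then implies $\mI_i(\cdot)$ exactly as in the proofs of Lemmas~\ref{the:correctness_readoverweakeq_shared}/\ref{the:correctness_readoverweakeq_blocal}; the optional conjunct $i = \cdot$ is produced directly when $i$ is shared.

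For $\lnot C \project{B} \land \mI$ unsat, the goal is to derive $a=b$ and contradict the B-projected $a\neq b$. Walking along the main path: on each B-subpath $s_1 \weakpath[\pi] s_2$, use \eqref{eq:rule_readoverweakeq} together with the B-projection of each $\mI_i$ (for $i\in\Stores{\pi}$) to infer $s_1=s_2$. On each A-subpath, the $\mathop{\mathrm{weq}}$ conjunct restricts the indices at which $s_1$ and $s_2$ may differ to those satisfying $\mI_i(\mathrm{diff})$ for some $i \in \Stores{\pi}$; the B-projection of the $i$-paths then rules out each such candidate by the unsat argument of the corresponding read-over-weakeq lemma. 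Chaining these array equalities along the main path yields $a=b$, contradicting the B-local $a\neq b$.

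The main obstacle is the case analysis inside the A-path $\mathop{\mathrm{weq}}$ argument: for every shape of the store index $i$ (shared, A-local, or mixed via $x_{ik}$) both directions must be checked, and in the forward direction one has to be careful that the diff term really can substitute for the ``shared term for $i=j$'' used in Sect.~\ref{sec:interpolants_readoverweakeq_shared}, including the refinement for select edges on the $i$-path. This is essentially the same case distinction as in Sects.~\ref{sec:interpolants_readoverweakeq_shared}--\ref{sec:interpolants_readoverweakeq_blocal}, but now parameterized by the diff, and the optional conjunct $i=\cdot$ must be tracked to keep the bookkeeping consistent when $i$ is also shared.
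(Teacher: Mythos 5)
Your symbol-condition argument and the direction $\lnot C\project{A}\Rightarrow\mI$ essentially coincide with the paper's proof: the paper simply notes that an \A-path $s_1\weakpath[\pi]s_2$ of the main path shows that $s_1$ and $s_2$ differ at most at the indices in $\Stores{\pi}$, each of which satisfies the corresponding $\mI_i$ as in Sect.~\ref{sec:interpolants_readoverweakeq_shared}; your unfolding of the weq term via the diff chain is that argument spelled out (apart from the stray mention of $F^A_\pi$, which belongs to the $i$-path summaries, not to the condition of the main-path weq). The genuine gap is in the direction ``$\lnot C\project{B}\land\mI$ is unsatisfiable''. You propose to derive $s_1=s_2$ for every \B-subpath of the main path, rule out all differences on \A-subpaths, and then chain these equalities of boundary arrays to obtain $a=b$. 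But equality of the subpath endpoints is not derivable: what $\mI_i$ together with $\Cond{a\weakcongi b}\project{B}$ yields (by the unsat direction of Lemma~\ref{the:correctness_readoverweakeq_shared} resp.~\ref{the:correctness_readoverweakeq_blocal}) is $\select{a}{i}=\select{b}{i}$, i.e., agreement of the two lemma endpoints $a$ and $b$ at the store index $i$, not agreement of the intermediate arrays $s_1,s_2$. Indeed the intermediate arrays of a store chain need not be pairwise equal even though $a=b$: for $a=\store{s}{i}{v}$ and $b=\store{s}{i}{w}$ with $v=w\neq\select{s}{i}$ one has $a=b$ but $a\neq s$ and $s\neq b$. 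So no sound argument can establish $s_1=s_2$ per subpath, and the chaining step collapses.

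The paper's proof avoids this by arguing index-wise instead of subpath-wise: assuming, towards a contradiction with $a\neq b$ in $\lnot C\project{B}$, that $a$ and $b$ differ at some index $i'$, there must be a subpath $s_1\weakpath[\pi]s_2$ of the main path with $\select{s_1}{i'}\neq\select{s_2}{i'}$. If $\pi$ is a \B-path, then $i'$ is one of its store indices $i$, and $\mI_i$ with the \B-projections of the $i$-path gives $\select{a}{i'}=\select{b}{i'}$, contradicting the choice of $i'$. If $\pi$ is an \A-path, the weq conjunct yields $\mI_i(i')$ (plus $i=i'$ when $i$ is shared) for some $i\in\Stores{\pi}$, and again together with $\Cond{a\weakcongi b}\project{B}$ this gives $\select{a}{i'}=\select{b}{i'}$, the same contradiction. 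In other words, the information extracted from the $i$-paths always concerns $a$ and $b$ at the offending index, never the intermediate boundary arrays; you need to restructure your unsat direction along these lines for the proof to go through.
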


\begin{proof}
The path summaries \(\mI_i\) fulfill the symbol conditions, and the boundary terms \(s_1,s_2\) used in the weq formulas are guaranteed to be shared.

\(\lnot C\project{A}\) implies \(\mI\):
By Sects.~\ref{sec:interpolants_readoverweakeq_shared} and \ref{sec:interpolants_readoverweakeq_blocal}, \(\Cond{a\weakcongi b} \project{A}\) implies \(\mI_{i}\) for \(i\in\Stores{\pi}\) where \(\pi\) is a \B-path on \(P\).
For an \A-path \(s_1\weakpath[\pi]s_2\) on \(P\), we know that \(s_1\) and \(s_2\) differ at most at \(|\pi|\) positions, namely at the indices \(i \in \Stores{\pi}\).
Each index satisfies the corresponding \(\mI_{i}\) by Sect.~\ref{sec:interpolants_readoverweakeq_shared}.
Hence, \(\weq{s_1}{s_2}{|{\pi}|}{\bigvee_{i\in\Stores{\pi}} I_i(\cdot) [{}\land i = \cdot]}\) holds.

\(\lnot C\project B \land\mI\) is unsat:
We first note that if \(a\) and \(b\) differ at some index \(i\), there must be an \A-path or a \B-path \(s_1\weakpath[\pi] s_2\) on the main path, such that \(s_1\) and \(s_2\) also differ at index \(i\).
We show that no such index exists.
For a \B-path \(s_1\weakpath[\pi] s_2\), \(s_1\) and \(s_2\) can only differ at \(i\in\Stores{\pi}\).
But for every \(i\in\Stores{\pi}\), we get \(\select{a}{i}=\select{b}{i}\) from \(I_i\) as in Lemma~\ref{the:correctness_readoverweakeq_shared} resp.\ \ref{the:correctness_readoverweakeq_blocal}.
For an \A-path \(s_1\weakpath[\pi] s_2\), the interpolant contains \(\weq{s_1}{s_2}{|{\pi}|}{\bigvee_{i\in\Stores{\pi}} (I_i(\cdot)[{}\land i = \cdot])}\).
Thus, if \(s_1\) and \(s_2\) differ at some index \(i'\), the interpolant implies \(I_i(i')\) for some index \(i\in\Stores{\pi}\) and additionally $i=i'$ if $i$ is shared.
Together with \(\Cond{a\weakcongi b} \project{B}\) this implies \(\select{a}{i'}=\select{b}{i'}\) as in the proof of Lemma~\ref{the:correctness_readoverweakeq_shared}.
This shows that there is no index where \(a\) and \(b\) differ, but this contradicts \(a\neq b\) in \(\lnot C\project{B}\).
\qed
\end{proof}

\subsection{$a= b$ is \A-local}
\label{sec:interpolants_weakeqext_alocal}

The case where \(a= b\) is \A-local is similar with the roles of $A$ and $B$ swapped.
For each \(i\in\Stores{\pi}\) on an \A-path \(\pi\) on \(P\), interpolate the corresponding \(i\)-path as in Sects.~\ref{sec:interpolants_readoverweakeq_shared_a}~or~\ref{sec:interpolants_readoverweakeq_alocal} and obtain $I_i$.
For each \(i\in\Stores{\pi}\) on a \B-path \(\pi\) on \(P\), interpolate the corresponding \(i\)-path as in Sect.~\ref{sec:interpolants_readoverweakeq_shared_a} using $\cdot$ as shared term and obtain $I_i(\cdot)$.

\begin{lemma}\label{the:correctness_weakeqext_alocal}
  The lemma
$\Cond{a\weakpath b} \land
  \bigwedge_{i\in\Stores{P}} \Cond{a\weakcongi b} \rightarrow a = b$
  where $a = b$ is \A-local has the partial interpolant
\begin{equation*}\label{eq:interpolants_weakeqext_alocal}
 \mI\equiv\bigvee_{\substack{i\in\Stores{\pi}\\ \pi\in\text{\A-paths}}} \hspace{-1em} \mI_i\quad \lor \bigvee_{(s_1 \weakpath[\pi] s_2)\in\text{\B-paths}}\hspace{-1em} \bignweq{s_1}{s_2}{|{\pi}|}{\bigwedge_{i\in\Stores{\pi}} \hspace{-1em}\big(\mI_{i}(\cdot)\;[{}\lor i \neq \cdot]\big)}\enspace.
\end{equation*}
\end{lemma}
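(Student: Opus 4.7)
My approach is to dualize the proof of Lemma~\ref{the:correctness_weakeqext_blocal} systematically: exchange the roles of \A and \B, replace the outer conjunction over paths by a disjunction, and replace each weq summary by an nweq summary. The $i$-path interpolants $I_i$ are now supplied by Lemmas~\ref{the:correctness_readoverweakeq_shared_a} and \ref{the:correctness_readoverweakeq_alocal} instead of by Lemmas~\ref{the:correctness_readoverweakeq_shared} and \ref{the:correctness_readoverweakeq_blocal}. The symbol condition is inherited from the cited inner lemmas, together with the fact that the endpoints $s_1,s_2$ of each main-path \B-path are shared boundary terms.

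For the direction $\lnot C\project A\Rightarrow\mI$, I would start from $a\neq b$ in $\lnot C\project A$: in every \A-model, $a$ and $b$ disagree at some index $i^*$ that, by the main-path structure, lies on some segment. If $i^*$ falls at an \A-path store index, the $\lnot C'\project A\Rightarrow I_{i^*}$ direction of Lemma~\ref{the:correctness_readoverweakeq_shared_a} or \ref{the:correctness_readoverweakeq_alocal} supplies the desired disjunct of $\mI$. If the disagreement propagates through a \B-path $s_1\weakpath[\pi]s_2$, then $s_1$ and $s_2$ must differ in the \A-model, and I would show that the corresponding nweq formula becomes true, because either $s_1$ and $s_2$ differ at more than $|\pi|$ indices in \A's view or each stepwise-rewriting diff index satisfies the inner conjunction, with each $i\in\Stores{\pi}$-clause discharged by either the $i\neq\cdot$ alternative (for shared $i$) or $I_i(\cdot)$ (obtained from the \A-projection of the $i$-path).

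For the direction $\lnot C\project B\land\mI$ unsat, I would case-split on which disjunct of $\mI$ is true. If it is $I_{i^*}$ for an \A-path index $i^*$, then $i^*$ is \A-local, so the virtual disequality $\select{a}{i^*}\neq\select{b}{i^*}$ in the corresponding roweq lemma's negation projects to true on \B, and Lemmas~\ref{the:correctness_readoverweakeq_shared_a} or \ref{the:correctness_readoverweakeq_alocal} immediately give $\Cond{a\weakcongi[i^*] b}\project B\land I_{i^*}\vDash\bot$. If the triggered disjunct is the nweq formula for a \B-path $s_1\weakpath[\pi]s_2$, then \B's view bounds the diff count of $s_1,s_2$ by $|\pi|$, so the ``more than $|\pi|$'' alternative of nweq fails; hence some intermediate diff index $i'$ satisfies the inner conjunction, and each $i\in\Stores{\pi}$ either matches $i'$ (when shared, via the $i\neq\cdot$ alternative being excluded) or forces $I_i(i')$. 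Lemma~\ref{the:correctness_readoverweakeq_shared_a} with $\cdot$ instantiated to $i'$ then yields $\select{s_1}{i'}=\select{s_2}{i'}$, contradicting $i'$ being a diff index of $s_1,s_2$.

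I expect the main obstacle to be the bookkeeping around the nested nweq/$I_i(\cdot)$ structure: each main-path nweq produces a sequence of diff indices that must be paired with the store indices on the corresponding \B-path, and the $i\neq\cdot$ alternative must be cross-referenced with whether $i$ is shared, \B-local, or mixed. The select-edge modifications from Sect.~\ref{sec:interpolants_weakeqext} further thicken the $F^A$ and $F^B$ carriers inside the $I_i$, and the case analysis has to ensure every variant still plugs correctly into the dualized roweq-interpolation lemmas.
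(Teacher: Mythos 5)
Your overall strategy coincides with the paper's: Lemma~\ref{the:correctness_weakeqext_alocal} is obtained there by dualizing Lemma~\ref{the:correctness_weakeqext_blocal} (swap \A and \B, conjunction becomes disjunction, weq becomes nweq, and the $i$-path summaries come from Sects.~\ref{sec:interpolants_readoverweakeq_shared_a} and~\ref{sec:interpolants_readoverweakeq_alocal}), and no separate proof is given, so your plan of transporting the Lemma~\ref{the:correctness_weakeqext_blocal} argument is exactly the intended one, and the organization of the two directions is right.

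Two of your concrete steps do not go through as written, though. First, in the direction $\lnot C\project{A}\Rightarrow\mI$ you claim that \emph{each} stepwise-rewriting diff index of a \B-path $s_1\weakpath[\pi]s_2$ satisfies the inner conjunction; this is false in general, since $s_1$ and $s_2$ may differ in the \A-model at indices where $a$ and $b$ agree, and there the conjunct for the matching store index can fail. What the dual argument needs is only the diff index equal to the index $i^*$ with $\select{a}{i^*}\neq\select{b}{i^*}$; moreover, for a \B-local $i\in\Stores{\pi}$ with $i\neq i^*$ the conjunct $\mI_i(i^*)$ is not ``obtained from the \A-projection of the $i$-path'' in the sense of Lemma~\ref{the:correctness_readoverweakeq_shared_a} (there is no virtual select disequality at $i$ then); it holds because $\select{a}{i^*}\neq\select{b}{i^*}$ together with the \A-parts of the $i$-path forces some \B-subpath to differ at $i^*$ or some $F^A$-disjunct to hold via EQ-congruence. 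Second, your closing step of the unsat direction fails: from $I_{i_0}(i')$ and $\lnot C\project{B}$ you cannot derive $\select{s_1}{i'}=\select{s_2}{i'}$, because $i'$ coincides (in \B's view) with the store index $i_0\in\Stores{\pi}$ of that very path, so read-over-write reasoning along $\pi$ does not apply at $i'$; and note that even $\select{a}{i'}=\select{b}{i'}$ would contradict nothing in $\lnot C\project{B}$, since $a\neq b$ is \A-local here. The correct contradiction is direct: \B knows $s_1\rewrite[|\pi|]s_2=s_2$ and $i_0=i'$, and with this equality the unsatisfiability property of the Sect.~\ref{sec:interpolants_readoverweakeq_shared_a} interpolant refutes $I_{i_0}(i')$ itself, so no genuine diff index can satisfy the inner conjunction, refuting the nweq disjunct.
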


\subsection{$a= b$ is mixed}
\label{sec:interpolants_weakeqext_mixed}

If \(a= b\) is mixed, where w.l.o.g.\ \(a\) is \A-local, the outer \A- and \B-paths end with \A-local or \B-local terms respectively.
The auxiliary variable \(x_{ab}\) may not be used in store or select terms, thus we first need to find a shared term representing \(a\) before we can summarize \A-paths.

\begin{example}
\label{the:motivatingexample_weakeqext_mixed}
Consider the following conflict:
\begin{align*}
 &\phantom{{}\land{}}a=\store{s}{i_1}{v_1} \land b=\store{s}{i_2}{v_2} \land a\neq b  &\quad&\text{(main path)}\\
 &{}\land \select{a}{i_1}=\select{s_1}{i_1} \land b=\store{s_1}{k_1}{w_1} \land i_1\neq k_1 &&\text{($i_1$-path)}\\
 &{}\land a=\store{s_2}{k_2}{w_2}\land i_2\neq k_2 \land \select{b}{i_2}=\select{s_2}{i_2} &&\text{($i_2$-path)}
\end{align*}
where \(a\), \(i_1\), \(v_1\), \(k_2\), \(w_2\) are \A-local, \(b\), \(i_2\), \(v_2\), \(k_1\), \(w_1\) are \B-local and \(s\), \(s_1\), \(s_2\) are shared.

\goodbreak
\noindent Our algorithm below computes the following interpolant for the conflict.
\begin{equation*}
\begin{aligned}
  \mI\equiv{}& I_0(s) \lor \bignweq{s}{s_1}{2}{I_0(\store{s}{\cdot}{\select{s_1}{\cdot}}) \land \EQ{x_{{i_1}{k_1}}}{\cdot} }\\
  &\text{where }I_0(\tilde{s}) = \EQ{x_{ab}}{\tilde{s}}\land\weq{\tilde{s}}{s_2}{1}{\EQ{x_{{i_2}{k_2}}}{\cdot}}
\end{aligned}
\end{equation*}
\end{example}

\paragraph{Algorithm.}
Identify in the main path \(P\) the first \A-path \(a\weakpath[\pi_0] s_1\) and its store indices $\Stores{\pi_0} = \{i_1,\dots i_{|\pi_0|}\}$.
To build an interpolant, we rewrite $s_1$ by storing at each index $i_m$ the value $a[i_m]$.  We use $\tilde s$ to denote the intermediate arrays.
We build a formula \(\mI_{m}(\tilde{s})\) inductively over $m \leq |\pi_0|$.
This formula is an interpolant if \(\tilde{s}\) is a shared array that differs from \(a\) only at the indices \(i_1,\dots,i_m\).

For \(m=0\), i.e., \(a=\tilde{s}\), we modify the lemma by adding the strong edge \(\tilde{s}\strongedge a\) in front of all paths and summarize it using the algorithm in Sect.~\ref{sec:interpolants_weakeqext_blocal}, but drop the weq formula for the path \(\tilde{s}\strongedge a\weakpath[\pi_0] s_1\).
This yields \(\mI_{\ref{sec:interpolants_weakeqext_blocal}}(\tilde{s})\).
We define
\begin{equation*}\label{eq:interpolants_weakeqext_mixed_nostores}
\mI_{0}(\tilde{s})\equiv\EQ{x_{ab}}{\tilde{s}}\land\mI_{\ref{sec:interpolants_weakeqext_blocal}}(\tilde{s})\enspace.\end{equation*}

For the induction step we assume that \(\tilde{s}\) only differs from \(a\) at \(i_1,\dots,i_m,i_{m+1}\).
Our goal is to find a shared index term \(x\) for \(i_{m+1}\) and a shared value \(v\) for \(\select{a}{x}\).  
We use the \(i_{m+1}\)-path to conclude that \(\store{\tilde{s}}{x}{v}\) is equal to \(a\) at \(i_{m+1}\).
Then we can include \(\mI_{m}(\store{\tilde{s}}{x}{v})\) computed using the induction hypothesis.

(i) If there is a select edge on a \B-subpath of the \(i_{m+1}\)-path or if \(i_{m+1}\) is itself shared, we immediately get a shared term \(x\) for \(i_{m+1}\).
If the last \B-path $\pi^{m+1}$ on the \(i_{m+1}\)-path starts with a mixed select equality, then the corresponding auxiliary variable is the shared value \(v\).
Otherwise, $\pi^{m+1}$ starts with a shared array \(s^{m+1}\) and \(v := \select{s^{m+1}}{x}\).
We summarize the \(i_{m+1}\)-path from \(a\) to the start of $\pi^{m+1}$ as in Sect.~\ref{sec:interpolants_readoverweakeq_shared_a} and get \(\mI_{\ref{sec:interpolants_readoverweakeq_shared_a}}(x)\).
Finally, we set
\begin{equation*}\label{eq:interpolants_weakeqext_mixed_sharedindex}
\mI_{m+1}(\tilde{s}) \equiv
\mI_{\ref{sec:interpolants_readoverweakeq_shared_a}}(x)
\lor
(\mI_{m}(\store{\tilde{s}}{x}{v}) \land
F^B_{\pi^{m+1}}(x))\enspace.
\end{equation*}

(ii) Otherwise, we split the \(i_{m+1}\)-path into \(a \weakcongi[i_{m+1}] s^{m+1}\) and \(s^{m+1} \weakpath[\pi^{m+1}] b\), where \(\pi^{m+1}\) is the last \B-subpath of the $i_{m+1}$-path.
If $s_1$ and $a$ are equal at $i_{m+1}$ then also $\tilde s$ and $a$ are equal and the interpolant is simply $I_m(\tilde s)$.  If $a$ and $s^{m+1}$ differ at $i_{m+1}$, we build an interpolant from \(a \weakcongi[i_{m+1}] s^{m+1}\) as in \ref{sec:interpolants_readoverweakeq_alocal} and obtain \(\mI_{\ref{sec:interpolants_readoverweakeq_alocal}}\). 
Otherwise, $s_1$ and $s^{m+1}$ differ at $i_{m+1}$.
We build the store path $s_1 \weakpath[P'] s^{m+1}$ by concatenating $P$ and $\pi^{m+1}$. 
Using nweq on the subpaths $s \weakpath[\pi] s'$ of $P'$ we find the shared term $x$ for $i_{m+1}$.
If $\pi$ is in $A$ we need to add the conjunct $s \rewrite[|\pi|] s' = s'$ to obtain an interpolant.
We get
\begin{equation*}\label{eq:interpolants_weakeqext_mixed_nosharedindex}
\begin{aligned}
\mI_{m+1}(\tilde{s}) \equiv{}& \mI_{m}(\tilde{s}) \: \lor \: \mI_{\ref{sec:interpolants_readoverweakeq_alocal}}\quad [\text{for }a\weakcongi[i_{m+1}] s^{m+1}] \lor{}\\
&\hspace{-2em} \bigvee_{\substack{s \weakpath[\pi] s' \text{ in \(P'\)}}} \hspace{-1em}\bignweq{s}{s'}{|\pi|}{\mI_{m}(\store{\tilde{s}}{\cdot}{\select{s^{m+1}}{\cdot}})\land F^B_{\pi^{m+1}}(\cdot)} ~ [{}\land s \rewrite[|\pi|] s' = s']\enspace. \\
\end{aligned}
\end{equation*}

\begin{lemma}\label{the:correctness_weakeqext_mixed}
  The lemma
$\Cond{a\weakpath b} \land
  \bigwedge_{i\in\Stores{P}} \Cond{a\weakcongi b} \rightarrow a = b$
  where $a= b$ is mixed has the partial interpolant
  \(\mI\equiv\mI_{|\pi_0|}(s_1)\).
\end{lemma}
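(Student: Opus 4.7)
The plan is to prove by induction on $m \in \{0, 1, \ldots, |\pi_0|\}$ the strengthened statement that $\mI_m(\tilde{s})$ is a partial interpolant whenever $\tilde{s}$ is a shared array term for which the A-part entails that $\tilde{s}$ differs from $a$ only at the indices $i_1, \ldots, i_m \in \Stores{\pi_0}$. Instantiating the statement at $m = |\pi_0|$ and $\tilde{s} = s_1$ yields the lemma, since the first A-path $\pi_0$ precisely certifies that $s_1$ differs from $a$ only at $i_1, \ldots, i_{|\pi_0|}$. For each step I would verify the three partial-interpolant obligations: the symbol condition, $\lnot C\project{A} \Rightarrow \mI_m(\tilde{s})$, and unsatisfiability of $\lnot C\project{B} \land \mI_m(\tilde{s})$.

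The symbol condition holds by construction: the mixed variable $x_{ab}$ appears only as the first argument of an EQ predicate in $\mI_0$; each stored value $v$ introduced in the recursion is either a shared select $\select{s^{m+1}}{x}$ or an element-valued auxiliary variable $x_{\select{a'}{k_1}\select{b'}{k_2}}$, both of which are admissible as the value argument of a store; and the sub-summaries inherit their symbol conditions from Lemmas~\ref{the:correctness_weakeqext_blocal}, \ref{the:correctness_readoverweakeq_shared_a}, and \ref{the:correctness_readoverweakeq_alocal}. For the base case $m = 0$, the hypothesis forces $\tilde{s} \strongeq a$ on the A-side, so $\EQ{x_{ab}}{\tilde{s}}$ follows from the A-projection $\EQ{x_{ab}}{a}$ by congruence, while $\mI_{\ref{sec:interpolants_weakeqext_blocal}}(\tilde{s})$ discharges the remaining paths by Lemma~\ref{the:correctness_weakeqext_blocal}; on the B-side, combining $\EQ{x_{ab}}{\tilde{s}}$ with $\lnot \EQ{x_{ab}}{b}$ from $\lnot C\project{B}$ yields $\tilde{s} \neq b$, and the weakeq-ext blocal argument then produces the contradiction.

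For the inductive step I would treat the two construction cases separately. In case (i), a shared term $x$ for $i_{m+1}$ is available: on the A-side, either $\mI_{\ref{sec:interpolants_readoverweakeq_shared_a}}(x)$ holds by Lemma~\ref{the:correctness_readoverweakeq_shared_a}, or the A-portion of the $i_{m+1}$-path together with $F^B_{\pi^{m+1}}(x)$ (derivable from the A-part exactly as in Sect.~\ref{sec:interpolants_readoverweakeq_shared}) implies $\select{a}{x} = v$, so $\store{\tilde{s}}{x}{v}$ agrees with $a$ at $i_{m+1}$ and differs from $a$ only at $i_1, \ldots, i_m$; the induction hypothesis then yields $\mI_m(\store{\tilde{s}}{x}{v})$, and the symmetric B-side argument closes the contradiction. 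Case (ii) splits three ways: if $\tilde{s}$ already agrees with $a$ at $i_{m+1}$ we conclude with $\mI_m(\tilde{s})$; if $a$ and $s^{m+1}$ already differ at $i_{m+1}$ along the weak-congruence part, Lemma~\ref{the:correctness_readoverweakeq_alocal} yields $\mI_{\ref{sec:interpolants_readoverweakeq_alocal}}$; otherwise the disagreement must surface on some subpath $s \weakpath[\pi] s'$ of the concatenated store path $P'$, and the corresponding nweq disjunct simultaneously exposes a diff index $x$ and certifies via the conjunct $s \rewrite[|\pi|] s' = s'$ (when $\pi$ is in A) that $x$ is one of the store indices to which the induction hypothesis applies with $\store{\tilde{s}}{\cdot}{\select{s^{m+1}}{\cdot}}$.

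I expect case (ii) of the induction step to be the main obstacle: because no shared term for $i_{m+1}$ is available, the interpolant must simultaneously range over all diff indices arising from stepwise rewriting along the concatenated store path $P'$ and ensure, on the B-side, that the diff index $x$ selected by the outermost nweq actually equals $i_{m+1}$ (so that $\select{s^{m+1}}{x}$ matches $\select{a}{i_{m+1}}$); the bookkeeping of which subpath belongs to A versus B (and hence whether the extra conjunct $s \rewrite[|\pi|] s' = s'$ is required) has to be handled carefully, and the mixed-variable discipline for $x_{ab}$ must be preserved throughout the nested recursive unfolding.
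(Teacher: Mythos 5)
Your overall architecture (induction over $m$ along the rewriting of $s_1$ towards $a$, the split into cases (i)/(ii), the reuse of Lemmas~\ref{the:correctness_weakeqext_blocal}, \ref{the:correctness_readoverweakeq_shared_a} and \ref{the:correctness_readoverweakeq_alocal}, and the A-side derivation of $\select{a}{x}=v$) matches the paper, but your induction invariant is wrong on the \B-side, and this is a genuine gap, not a presentation issue. You claim that each $\mI_m(\tilde s)$ is itself a partial interpolant, in particular that $\lnot C\project{B}\land\mI_m(\tilde s)$ is unsatisfiable for every admissible $\tilde s$. That is false for intermediate stages: $\mI_0(\tilde s)$ is built by \emph{dropping} the weq summary of the path $\tilde s\strongedge a\weakpath[\pi_0]s_1$, so on the \B-side the most you can extract is $\tilde s\neq b$ (from $\EQ{x_{ab}}{\tilde s}$ and $\lnot\EQ{x_{ab}}{b}$) and, by the argument of Lemma~\ref{the:correctness_weakeqext_blocal}, that the index where $\tilde s$ and $b$ differ must lie on the unsummarized subpath, i.e.\ that $\tilde s$ also differs from $s_1$ there. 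A model in which $\tilde s$ differs from $s_1$ and $b$ at a fresh index satisfies $\lnot C\project{B}\land\mI_0(\tilde s)$, so your base-case claim that ``the weakeq-ext blocal argument then produces the contradiction'' fails, as does ``the symmetric \B-side argument closes the contradiction'' in the step. The invariant that actually works (and is what the paper proves) is weaker: under $\lnot C\project{B}$, $\mI_m(\tilde s)$ implies that $\tilde s$ differs from $s_1$ at some index where it also differs from $b$; the contradiction appears only at the top level $\tilde s=s_1$. Proving that invariant needs a propagation argument you do not supply: in case~(i) (and in the nweq disjunct of case~(ii)) one derives $\select{b}{x}=v$ from $F^B_{\pi^{m+1}}(x)$, $\Cond{\pi^{m+1}}\project{B}$ and the choice of $v$, so the differing index provided by the inner hypothesis for $\store{\tilde s}{x}{v}$ cannot be $x$ and therefore survives as a differing index for $\tilde s$ itself.

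A second, smaller defect is on the \A-side: your hypothesis quantifies only over $\tilde s$ differing from $a$ at $i_1,\dots,i_m$, whereas the paper additionally requires that $\tilde s$ differs from $s_1$ only at indices where it does \emph{not} differ from $a$. This extra condition is what justifies, in case~(ii), the step you state as ``otherwise the disagreement must surface on some subpath of $P'$'': one needs $\tilde s$ and $s_1$ to store the same value at $i_{m+1}$ (different from $a$'s value, which equals $s^{m+1}$'s value by the failed $\mI_{\ref{sec:interpolants_readoverweakeq_alocal}}$ disjunct) in order to conclude that $s_1$ and $s^{m+1}$ differ at $i_{m+1}$ and hence that some subpath $s\weakpath[\pi]s'$ of $P'$ has ends differing there; one must also check that this condition is preserved when passing to $\store{\tilde s}{\cdot}{\select{s^{m+1}}{\cdot}}$. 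Without strengthening the invariant in both of these ways, neither direction of your induction goes through.
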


\iflongversion{
\begin{proof}
\(\lnot C\project{A}\) implies \(\mI\):
We assume \(\lnot C\project{A}\) holds.  Let \(a\weakpath[\smash{\pi_0}] s_1\) be the initial \A-path of the main path of $C$ and \(\Stores{\pi_0} = \{i_1, \dots, i_{|\pi_0|}\}\).
We show by induction over $m$ that \(\mI_m(\tilde{s})\) holds for all \(\tilde{s}\) that differ from \(a\) only at indices \(i_1,\dots, i_m\) and differ from \(s_1\) only at indices where \(\tilde{s}\) doesn't differ from \(a\).
Since \(\mI = \mI_{|\pi_0|}(s_1)\) and the \A-path $\pi_0$ shows that \(s_1\) and \(a\) only differ at \(i_1,\dots, i_{|\pi_0|}\), this concludes the case.
  
For $m=0$, we have to show \(\mI_0(\tilde{s})\) for $\tilde{s} = a$.  This is \(\EQ{x_{ab}}{\tilde{s}} \land \mI_{\ref{sec:interpolants_weakeqext_blocal}}\).  The first conjunct follows from $\EQ{x_{ab}}{a}$ in \(\lnot C\project{A}\) and the second conjunct follows with Lemma~\ref{the:correctness_weakeqext_blocal}.
  
For the step to $m+1$, there are two cases (i)~and~(ii) in the algorithm.
In case~(i), we have to show \(\mI_{\ref{sec:interpolants_readoverweakeq_shared_a}}(x) \lor (\mI_m(\store{\tilde{s}}{x}{v})\land F^B_\pi(x))\).
The \A-part implies $x=i_{m+1}$ due to the way $x$ was chosen.
Also $F^B_\pi(x)$ holds because the \A-part contains all conjuncts of $F^B_\pi(i_{m+1})$.
If $\mI_{\ref{sec:interpolants_readoverweakeq_shared_a}}(x)$ doesn't hold, then as in the contradiction proof of Lemma~\ref{the:correctness_readoverweakeq_shared} (note that \A and \B swap their role) $\select{a}{x} = v$ holds.
Hence, \(\store{\tilde{s}}{x}{v}\) stores the same value as $a$ at index $i_{m+1}$.
Thus, the induction hypothesis is applicable and $\mI_m(\store{\tilde{s}}{x}{v})$ holds.

In case~(ii), if $\tilde{s}$ and $a$ do not differ at $i_{m+1}$, $\mI_m(\tilde{s})$ holds by induction hypothesis.
Otherwise $\tilde{s}$ and $s_1$ store the same value at index $i_{m+1}$ different from $a$.
If $\mI_{\ref{sec:interpolants_readoverweakeq_alocal}}$ for the weak path $a\weakcongi[i_{m+1}] s^{m+1}$ does not hold, then similar as in the contradiction case of Lemma~\ref{the:correctness_readoverweakeq_blocal} (again $A$ and $B$ swap their roles), $\select{a}{i_{m+1}}=\select{s^{m+1}}{i_{m+1}}$ holds.
Thus $s_1$ and $s^{m+1}$ differ at $i_{m+1}$ and the latter array stores the same value as $a$.
Since path $P'$ runs from $s_1$ to $s^{m+1}$, on some of its subpaths $s\weakpath[\pi]s'$ the arrays $s$ and $s'$ differ at $i_{m+1}$.
If the subpath $\pi$ is \A-local, $s\rewrite[|\pi|]s'=s'$ follows from $\Cond{\pi}\project{A}$.
Finally, for the index $i_{m+1}$ the formula $\mI(\store{\tilde{s}}{\cdot}{\select{s^{m+1}}{\cdot}})$ holds by induction hypothesis and all conjuncts of $F^B_{\pi^{m+1}}(i_{m+1})$ are in $\Cond{\pi^{m+1}}\project{A}$.
Hence, \(\mI_{m+1}(\tilde{s})\) holds.
  
\(\lnot C\project{B}\land\mI\) is unsat:
We assume \(\lnot C\project{B}\) holds.  We show by induction over $m$ that \(\mI_m(\tilde{s})\) implies that \(\tilde{s}\) differs from \(s_1\) at some index \(i\) where it also differs from $b$.
Thus, \(\mI = \mI_{|\pi_0|}(s_1)\) leads to a contradiction.

For $m=0$, assume \(\EQ{x_{ab}}{\tilde{s}} \land \mI_{\ref{sec:interpolants_weakeqext_blocal}}\) holds.
The first part shows that $\tilde{s}$ and $b$ differ at some index $i$. Using Lemma~\ref{the:correctness_readoverweakeq_blocal}, we conclude that $\tilde{s}$ also differs from $s_1$ at index $i$, since this was the only part omitted when computing \(\mI_{\ref{sec:interpolants_weakeqext_blocal}}\).
This shows the induction hypothesis for $m=0$.

For the step to $m+1$, assume $\I_{m+1}(\tilde{s})$ holds.
In case (i), $\mI_{\ref{sec:interpolants_readoverweakeq_shared_a}}(x)$ cannot hold, because it leads to a contradiction with $\Cond{a\weakcongi[i_{m+1}]s^{m+1}}\project{B}$.
Hence, $\mI_m(\store{\tilde{s}}{x}{v}) \land F^B_{\pi^{m+1}}(x)$ holds.
The formulas \(F^B_{\pi^{m+1}}(x)\) and \(\Cond{\pi^{m+1}}\project{B}\) together with the choice of $v$ ensure that $\select{b}{x} = v$.
By the induction hypothesis, $\store{\tilde{s}}{x}{v}$ differs from $s_1$ at some index \(i\) where it also differs from \(b\) and this cannot be at $x$ because of $\select{b}{x} = v$.
Thus $\tilde{s}$ and $s_1$ also differ at this index.
In case (ii), if $\mI_m(\tilde{s})$ holds, we can use the induction hypothesis directly.
The disjunct $\mI_{\ref{sec:interpolants_readoverweakeq_alocal}}$ contradicts \(\Cond{a\weakcongi[i_{m+1}]s^{m+1}}\project{B}\).
If for some path $s\weakpath[\pi]s'$ the nweq disjunct holds, then we first note that \(s \rewrite[|\pi|] s' = s'\) holds, either because it is part of the disjunct or because $s \weakpath[\pi] s'$ is a \B-path of length $|\pi|$.
Hence, the nweq ensures that for some index $x$ the formula $\mI_{m}(\store{\tilde{s}}{x}{\select{s^{m+1}}{x}})\land F^B_{\pi^{m+1}}(x)$ holds.
Similar to case~(i), we get $\select{b}{x} = \select{s^{m+1}}{x}$ from $F^B_{\pi^{m+1}}(x)$ and, by induction hypothesis, $\tilde{s}$ must differ from $s_1$ at some index where it differs from $b$.
\qed
\end{proof}
}\else{
  \noindent A proof by induction over the length of the path $\pi_0$ can be found in~\cite{2018arXiv180407173H}.
}\fi

\begin{theorem}\label{the:correctness_weakeqext}
  Sects.~\ref{sec:interpolants_weakeqext_blocal}--\ref{sec:interpolants_weakeqext_mixed} give interpolants for all cases of the rule \(\eqref{eq:rule_weakeqext}\).
\end{theorem}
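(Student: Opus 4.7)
My plan is to observe that Theorem~\ref{the:correctness_weakeqext} is essentially a summary statement: given any instantiation of rule~\eqref{eq:rule_weakeqext} together with a partitioning $(A,B)$, the conflict literal $a = b$ must occur in $A$, in $B$, or be mixed, and these three cases are exhaustive and mutually exclusive. So the proof amounts to a case split followed by invocation of the three lemmas proved in Sects.~\ref{sec:interpolants_weakeqext_blocal}--\ref{sec:interpolants_weakeqext_mixed}.

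Concretely, I would first dispatch case~(i), where $a = b$ is in $B$: the interpolant given in Sect.~\ref{sec:interpolants_weakeqext_blocal} is correct by Lemma~\ref{the:correctness_weakeqext_blocal}. Symmetrically, case~(ii), where $a = b$ is $A$-local, is handled by Lemma~\ref{the:correctness_weakeqext_alocal}; correctness follows by the same argument with the roles of $A$ and $B$ swapped and \weq replaced by \nweq, exactly as was done between Sect.~\ref{sec:interpolants_readoverweakeq_blocal} and Sect.~\ref{sec:interpolants_readoverweakeq_alocal}. Finally, for case~(iii), where $a=b$ is mixed, the interpolant $\mI_{|\pi_0|}(s_1)$ is correct by Lemma~\ref{the:correctness_weakeqext_mixed}.

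The only genuine subtlety is that the $i$-paths occurring in a weakeq-ext conflict may contain select edges, so the subprocedures from Sect.~\ref{sec:interpolants_readoverweakeq} are not applied verbatim but in the adapted form described at the beginning of Sect.~\ref{sec:interpolants_weakeqext} (the modified $F^A_\pi$ and $F^B_\pi$ that take select-edge index equalities into account, and the use of auxiliary variables $x_{\select{a'}{k_1}\select{b'}{k_2}}$ in path summaries for mixed select equalities). I would therefore point out that the correctness arguments underlying Lemmas~\ref{the:correctness_readoverweakeq_shared}--\ref{the:correctness_readoverweakeq_alocal} extend to weak-congruence paths: the new disjuncts or conjuncts added to $F^A_\pi$ and $F^B_\pi$ exactly mirror the disequality treatment of store edges, and the mixed select case produces a shared summary term of the required shape. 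This is the step that carries the real content; the authors have already given this extension as the auxiliary lemma in the long version.

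The main potential obstacle is not the case split itself but verifying that the symbol conditions are preserved throughout, especially for case~(iii), where the summary $\mI_{m+1}(\tilde{s})$ is built inductively and must only mention shared symbols, auxiliary variables used in accordance with the mixed-equality discipline of Sect.~\ref{sec:prooftreepreservinginterpolation}, and the shared value/index terms $x$, $v$, $s^{m+1}$ identified by the algorithm. Once one checks that each step in the induction either inherits the symbol condition from the inductive hypothesis or from the subroutine (Sect.~\ref{sec:interpolants_readoverweakeq_shared_a} or~\ref{sec:interpolants_readoverweakeq_alocal}), the theorem follows by assembling the three lemmas.
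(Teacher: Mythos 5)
Your proposal is correct and matches the paper's (implicit) argument exactly: the theorem is a corollary of the exhaustive case split on whether $a=b$ is in $B$, $A$-local, or mixed, discharged by Lemmas~\ref{the:correctness_weakeqext_blocal}, \ref{the:correctness_weakeqext_alocal} and \ref{the:correctness_weakeqext_mixed}, together with the auxiliary lemma extending the read-over-weakeq interpolants to weak congruence paths with select edges. You also correctly flag the symbol-condition check in the mixed case as the only point needing care, which is precisely where the paper's long-version proofs put the work.
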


\section{Complexity}\label{sec:complexity}

Expanding the definition of an array rewrite term $a \rewrite[k] b$ na\"ively already yields a term exponential in $k$.
This is avoided by using let expressions for common subterms.
With this optimization the interpolants for read-over-weakeq lemmas are quadratic in the worst case.
The interpolants of Sects.~\ref{sec:interpolants_readoverweakeq_shared}~and~\ref{sec:interpolants_readoverweakeq_shared_a} contain at most one literal for every literal in the lemma, so the interpolant is linear in the size of the lemma.
The interpolants of Sects.~\ref{sec:interpolants_readoverweakeq_blocal}~and~\ref{sec:interpolants_readoverweakeq_alocal} are quadratic, since expanding the definition of weq will copy the formula $F_\pi^A(\cdot)$ resp.\ $F_\pi^B(\cdot)$, for each local store edge and instantiate it with a different shared term.

\begin{example}
\label{ex:complexity_roweq}
The following interpolation problem has only quadratic interpolants.
\[
  \begin{aligned}
    A :{}& b = \store{\store{a}{i_1}{v_1}\cdots}{i_n}{v_n} \land
         p_1(i_1) \land \dots \land p_n(i_n)\\
    B :{}& a[j] \neq b[j] \land \lnot p_1(j) \land \dots \lnot p_n(j)\\
    I \equiv{} &\smtlet a_0 = a  
     \smtlet d_1 = \diff{a_0}{b} \smtlet a_1 = \store{a_0}{d_1}{\select{b}{d_1}}\\
     &\dots
     \smtlet d_n = \diff{a_{n-1}}{b} \smtlet a_n = \store{a_{n-1}}{d_n}{\select{b}{d_n}} \\
     & (p_1(d_1) \lor \dots \lor p_n(d_1) \lor a_0 = b) \land {\cdots} \\
     & (p_1(d_n) \lor \dots \lor p_n(d_n) \lor a_{n-1}=b) \land  a_n = b
  \end{aligned}\]
  There is no interpolant that is not quadratic in $n$.
  The interpolant has to imply that $p_k(i_k)$ is true for every $k$.
  There are no shared index-valued terms in the lemma.
  Hence, the only way to express the $i_k$ values using shared terms is by applying the diff operator on $a$ and $b$ and constructing diff chains as in the interpolant $I$.
  The diff operator returns one of the $i_1,\dots,i_n$ in every step, but it is not determined which one.
  Consequently, every combination $p_k(d_l)$ is needed.
\end{example}

The algorithms in Sects.~\ref{sec:interpolants_weakeqext_blocal}~and~\ref{sec:interpolants_weakeqext_alocal} produce a worst-case quadratic interpolant as they nest the linear interpolants of \ref{sec:interpolants_readoverweakeq_shared}~and~\ref{sec:interpolants_readoverweakeq_shared_a} in a weq resp.\ nweq formula, which expands this term a linear number of times.
However, the algorithm in \ref{sec:interpolants_weakeqext_mixed} is worst-case exponential in the size of the extensionality lemma. 

The following example explains why this bound is strict.
This example also shows that the method of Totla and Wies~\cite{DBLP:journals/jar/TotlaW16} is not complete.
In particular, for $n=1$ their preprocessing algorithm produces a satisfiable formula from the original interpolation problem.

\begin{example}
The following interpolation problem of size $O(n^2)$ has only interpolants of exponential size in $n$.
  \begin{align*}
    A :{} &a = \store{\store{s}{i^A_{1}}{v^A_1}\cdots}{i^A_{n}}{v^A_n} \land p(a) \land {}\\
    &\bigwedge_{j=1}^n p_j(i^A_j) \:\land\:
    \bigwedge_{j=1}^n \select{a}{i^A_{j}} = \select{s_{j}}{i^A_{j}} \land{} \\
    &\bigwedge_{j=1}^n \bigwedge_{l=0, l\neq j}^n q_j(i^A_{l}) \:\land \:
    \bigwedge_{j=1}^n t_j = \store{\store{a}{i^A_{0}}{w^A_{j0}} \dots \xcancel{\store{}{i^A_{j}}{w^A_{jj}}} \dots}{i^A_{n}}{w^A_{jn}}
  \end{align*}
  \begin{align*}
    B :{} &b = \store{\store{s}{i^B_{1}}{v^B_1}\cdots}{i^B_{n}}{v^B_n} \land \lnot p(b) \land {}\\
    &\bigwedge_{j=1}^n \bigwedge_{l=0, l \neq j}^n \lnot p_j(i^B_{l}) \:\land \:
     \bigwedge_{j=1}^n s_j = \store{\store{b}{i^B_{0}}{w^B_{j0}}\dots \xcancel{\store{}{i^B_{j}}{w^B_{jj}}}\dots}{i^B_{n}}{w^B_{jn}}  \land{}\\
    &\bigwedge_{j=1}^n \lnot q_j(i^B_{j}) \:\land\:
     \bigwedge_{j=1}^n \select{b}{i^B_{j}} = \select{t_{j}}{i^B_{j}}
  \end{align*}
  The first line of \A and the first line of \B ensure that there is a store-chain from $a$ over $s$ to $b$ of length $2n$ and $p(a)$ and $\lnot p(b)$ are used to derive the contradiction from the extensionality axiom.
  To prove that $a$ and $b$ are equal, the formulas show that they are equal at the indices $i^A_j$, $j=1,\dots,n$ (second line of $A$ and $B$).
  Here $p_j$ is used to ensure that $i^A_j$ is distinct from all $i^B_{l}$, $l\neq j$.
  Analogously the last line of $A$ and $B$ shows that $a$ and $b$ are equal at the indices $i^B_j$, $j=1,\dots,n$.
  
  Since $p(a) \land \lnot p(b)$ is essential to prove unsatisfiability, the interpolant needs to contain the term $p(\cdot)$ for some shared array term that is equal to $a$ and $b$.  
  This can only be expressed by store terms of size $n$, e.\,g., $p(\store{\store{s}{i_1}{\cdot}\cdots}{i_n}{\cdot})$ (alternatively some store term starting on $s_j$ or $t_j$ can be used).
  As in the previous example, the store indices $i_j$ can only be expressed using diff chains between shared arrays.
  For each index there is only one shared array that is guaranteed to contain the right value. 
  The diff function returns the indices in arbitrary order.
  Therefore, the interpolant needs a case for every combination of diff term and value, as it is done by the interpolant computed in Section~\ref{sec:interpolants_weakeqext_mixed}.
  This means the interpolant contains exponentially many $p(\cdot)$ terms.
\end{example}

\section{Evaluation}
We implemented the presented algorithms into
\smtinterpol~\cite{DBLP:conf/spin/ChristHN12}, an SMT solver computing sequence
and tree interpolants.
Our implementation verifies at run-time that the returned interpolants are
correct.
To evaluate the interpolation algorithm we used the \automizer software
model-checker~\cite{DBLP:conf/tacas/HeizmannCDGNMSS17} on the memory safety
track of the \svcomp~2018~\footnote{\url{https://sv-comp.sosy-lab.org/2018/}}
benchmarks.
This track was chosen because \ultimate uses arrays to model memory access.
We ran our experiments using the open-source benchmarking software
\texttt{benchexec}
~\cite{DBLP:conf/spin/0001LW15a}
on a
machine with a 3.4~GHz Intel i7-4770 CPU and set a 900~s time and a 6~GB memory
limit.
As comparison, we ran \ultimate with \zzz~\footnote{\url{https://github.com/Z3Prover/z3} in version 4.6.0 (2abc759d0)} and
SMTInterpol without array interpolation using \ultimate's built-in theory-independent interpolation scheme based on unsatisfiable cores and predicate transformers~\cite{DBLP:conf/tacas/HeizmannDLMP15}.

Table~\ref{tab:stats} shows the result. 
From the 326 benchmarks we removed 50 benchmarks which \ultimate could not parse.
The unknown results come from non-linear arithmetic (\smtinterpol), quantifiers (due to incomplete elimination in the setting \smtinterpol-NoArrayInterpol), or incomplete interpolation engine (\zzz).
Our new algorithm solves 12.6~\% more problems, and both helps to verify safety and guide the counterexample generation for unsafe benchmarks.

\begin{table}[t]
\centering
\caption{Evaluation of \automizer on the \svcomp benchmarks for memsafety running with our new interpolation engine, without array interpolation, and \zzz.
\label{tab:stats}}
{
\begin{tabular}{@{}l>{\raggedleft}m{1.2cm}>{\raggedleft}m{1.2cm}>{\raggedleft}m{1.2cm}>{\raggedleft}m{1.2cm}r@{}}
\toprule
Setting                     & Tasks & Safe & Unsafe & Timeout & Unknown \\ 
\midrule
\smtinterpol-ArrayInterpol   & 276   & 101   & 96   & 66      & 13      \\
\smtinterpol-NoArrayInterpol & 276   &  92   & 83   & 75      & 26      \\
\zzz                          & 276   &  32   & 44   & 13      & 187     \\
\bottomrule
\end{tabular}
}

\end{table}

\section{Conclusion}
\label{sec:conclusion}

We presented an interpolation algorithm for the quantifier-free fragment of the theory of arrays.
Due to the technique of proof tree preserving interpolation, our algorithm also works for the combination with other theories.
Our algorithm operates on lemmas produced by an efficient array solver based on weak equivalence on arrays.
The interpolants are built by simply iterating over the weak equivalence and weak congruence paths found by the solver.
We showed that the complexity bound on the size of the produced interpolants is optimal.

In contrast to most existing interpolation algorithms for arrays, the solver does not depend on the partitioning of the interpolation problem.
Thus, our technique allows for efficient interpolation especially when several interpolants for different partitionings of the same unsatisfiable formula need to be computed.
Although it remains to prove formally that the algorithm produces tree interpolants,
during the evaluation all returned tree interpolants were correct.

\paragraph{Acknowledgement.}
We would like to thank Daniel Dietsch for running the experiments.

\bibliographystyle{plain}
\bibliography{paper-short}

\end{document}